\newcommand{\sign}{\mathsf{sign}}
\newcommand{\MAJ}{\mathsf{MAJ}}
\newcommand{\LDL}{\mathsf{LDL}}
\newcommand{\LTF}{\mathsf{LTF}}
\newcommand{\PL}{\mathsf{PL}}
\newcommand{\PT}{\mathsf{PT}}
\newcommand{\OR}{\mathsf{OR}}
\newcommand{\EQ}{\mathsf{EQ}}
\newcommand{\OMB}{\mathsf{OMB}}
\newcommand{\SINK}{\mathsf{SINK}}
\newcommand{\PAR}{\mathsf{PARITY}}
\newcommand{\AND}{\mathsf{AND}}
\newcommand{\XOR}{\mathsf{XOR}}
\newcommand{\R}{\mathbb{R}}
\newcommand{\E}{\mathbb{E}}
\newcommand{\wh}{\widehat}
\newcommand{\bra}[1]{\{#1\}}
\newcommand{\bool}{\{0,1\}}
\newcommand{\boolfn}{\ensuremath {\bool}^n \rightarrow \bool}
\newtheorem{theorem}{Theorem}
\newtheorem{corollary}[theorem]{Corollary}
\newtheorem{lemma}[theorem]{Lemma}
\newtheorem{definition}[theorem]{Definition}
\newtheorem{fact}[theorem]{Fact}
\newtheorem{remark}[theorem]{Remark}
\begin{document}
\title{Lower bounds for linear decision lists}
\author[1]{Arkadev Chattopadhyay}
\author[2]{Meena Mahajan}
\author[3]{Nikhil Mande\footnote{This work was done while the author was a graduate student at TIFR, Mumbai.}}
\author[4]{Nitin Saurabh}
\affil[1]{Tata Institute of Fundamental Research, Mumbai, India\\
arkadev.c@tifr.res.in}
\affil[2]{The Institute of Mathematical Sciences, HBNI, Chennai, India\\
meena@imsc.res.in}
\affil[3]{Georgetown University, Washington, D.C.~USA\\
nikhil.mande@georgetown.edu}
\affil[4]{Max Planck Institut f\"{u}r Informatik, Saarland Informatics Campus, Saarbr\"ucken, Germany\\
  
nsaurabh@mpi-inf.mpg.de}

\date{}
\maketitle
\begin{abstract}
We demonstrate a lower bound technique for linear decision lists, which are decision lists where the queries are arbitrary linear threshold functions.
We use this technique to prove an explicit lower bound by showing that any linear decision list computing the function $\MAJ \circ \XOR$ requires size $2^{0.18 n}$.  This completely answers an open question of Tur{\'a}n and Vatan \cite{TV97}. We also show that the spectral classes $\PL_1, \PL_\infty$, and the polynomial threshold function classes $\wh\PT_1, \PT_1$,  are incomparable to linear decision lists. 
\end{abstract}

\section{Introduction}

Decision lists are a widely studied model of computation, first introduced by Rivest \cite{Rivest87}.
A decision list $L$ of size $\ell$ computing a Boolean function $f \in B_n$ is a sequence of $\ell - 1$ instructions of the form \textbf{if $f_i(x) = a_i$ then output $b_i$ and stop}, followed by the instruction \textbf{output $\neg b_{\ell-1}$ and stop}. Here $B_n$ denotes the set of all Boolean functions in $n$ variables, each $f_i \in B_n$ is called a \emph{query function}, and $a_i$ and $b_i$ are Boolean constants.  If the functions $f_i$ all belong to a function class $S \subseteq B_n$, then $L$ is said to be an $S$-decision list.

Krause \cite{Krause06} showed that there are functions with small representation as $\AND$-decision lists, but requiring exponential size when computed by depth-two circuits with a linear threshold gate at the top and
$\XOR$ gats at the bottom. 
On the other hand, Impagliazzo and Williams \cite{IW10} showed that a certain condition is sufficient to prove lower bounds against a related computation model that can be termed rectangle-decision lists.
Linear decision lists are decision lists where the query functions are linear threshold functions.  Lower bounds against linear decision lists (and even against bounded-rank linear decision trees, a natural generalisation) for the Inner Product modulo $2$ function were proved by Gr{\"{o}}ger,  Tur{\'{a}}n and Vatan, in  \cite{GT91, TV97}. Subsequently, Uchizawa and Takimoto~\cite{UT11, UT15} showed lower bounds against the class of linear decision lists and linear decision trees when the weights of the linear threshold queries are bounded by a polynomial in the input length.
In fact, the lower bounds of \cite{UT11, UT15} apply to any function with large \emph{unbounded-error communication complexity}.

We observe that the lower bound argument in \cite{TV97} shows that functions efficiently computable by linear decision lists (with no restrictions on the weights of the queried linear threshold functions) must have large monochromatic rectangles.
In fact, we build on their argument to establish a more general result (Lemma~\ref{lem:largerects}).
Informally, we show that if a function has no ``large'' weight monochromatic rectangles under some product distribution then it 
cannot be expressed by ``small'' linear decision lists. 
We then use this fact to establish a lower bound for a seemingly simple function, $\MAJ \circ \XOR$ (see Definition~\ref{def:xor-fns}).
Our main theorem is as follows.

\begin{theorem}\label{thm:main}
  Any linear decision list computing $\MAJ_n \circ \XOR$ must have size
  $2^{\Omega(n)}$.
\end{theorem}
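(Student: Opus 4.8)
The plan is to prove Theorem~\ref{thm:main} by contraposition through Lemma~\ref{lem:largerects}: it suffices to exhibit a product distribution under which \emph{every} monochromatic rectangle of $\MAJ_n \circ \XOR$ has exponentially small measure, since a short linear decision list is guaranteed to produce a monochromatic rectangle of non-negligible measure. First I would fix the communication split that makes the function natural: give Alice the $n$ variables feeding the first input of each $\XOR$ gate and Bob the $n$ feeding the second, so that Alice holds $x \in \bool^n$, Bob holds $y \in \bool^n$, and $\MAJ_n \circ \XOR(x,y)=1$ exactly when the Hamming distance $|x \oplus y|$ exceeds $n/2$. Under this split an arbitrary threshold query $\sum_i \alpha_i x_i + \sum_i \beta_i y_i \ge \theta$ decomposes as $L_A(x) + L_B(y) \ge \theta$, which is the only feature of the queries that Lemma~\ref{lem:largerects} relies on. I would take the product distribution to be uniform on $\bool^n \times \bool^n$.

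Next I would reformulate the monochromatic-rectangle condition geometrically. Writing $a=(-1)^x$ and $b=(-1)^y$ in $\{-1,1\}^n$, we have $\langle a,b\rangle = n - 2|x\oplus y|$, so a rectangle $A \times B$ is monochromatic-$0$ precisely when $\langle a,b\rangle \ge 0$ for all $a\in A,\, b\in B$; the two colours are interchanged by flipping all of Bob's bits (which sends $|x\oplus y|$ to $n-|x\oplus y|$ and preserves the uniform measure), so it is enough to bound monochromatic-$0$ rectangles. Replacing $B$ by its antipodal image $-B$ turns the condition into the statement that every $a\in A$ and every point of $-B$ are at Hamming distance at least $n/2$. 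Thus $A$ and $-B$ are two subsets of the cube whose mutual distance is at least half the dimension, and I must show $|A|\cdot|B| \le 2^{(2-\varepsilon)n}$ for some constant $\varepsilon>0$.

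To obtain an \emph{exponential} rather than merely polynomial bound here I would invoke the vertex-isoperimetric (Harper) inequality for the Hamming cube. Since no point of $-B$ lies in the open $(n/2)$-neighbourhood of $A$, and among sets of a given size Hamming balls minimise the neighbourhood and hence maximise its complement, I may assume $A$ is a ball of radius $\rho n$ when upper-bounding $|B|$. Its $(n/2)$-neighbourhood is essentially a ball of radius $(\rho+\tfrac12)n$, confining $-B$ to a ball of radius $(\tfrac12-\rho)n$; the exponent $H(\rho)+H(\tfrac12-\rho)$ is maximised at $\rho=\tfrac14$, giving $|A|\cdot|B| \le 2^{(2H(1/4)+o(1))n}$, i.e.\ measure at most $2^{-(2-2H(1/4))n+o(n)} = 2^{-0.38n+o(n)}$, with the low-weight ball $\{x:|x|\le n/4\}$ showing the estimate is tight. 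Feeding this into Lemma~\ref{lem:largerects}, whose quantitative form I expect to guarantee a monochromatic rectangle of measure at least roughly the inverse square of the list size, then forces size $2^{\Omega(n)}$, with the leading constant near the $0.18$ quoted in the abstract.

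The main obstacle is the middle step. A first-moment argument controlling only the \emph{average} distance over the rectangle is useless, because the uniform distribution sits exactly at the majority threshold, so $A=B=\bool^n$ already satisfies the average constraint while being far from monochromatic; the entire force of the argument must come from the worst-pair distance condition. Converting that uniform condition into an exponential measure bound is exactly where the isoperimetric inequality is indispensable, and getting the extremal ball computation and the blow-up estimate right, then tracking the resulting constant through the loss incurred in Lemma~\ref{lem:largerects} (and handling lower-order issues such as the middle layer and strict versus non-strict thresholds), is the delicate part.
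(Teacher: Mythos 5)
Your proposal is correct and follows essentially the same route as the paper: apply Lemma~\ref{lem:largerects} with the uniform product distribution, and use Harper's isoperimetric theorem to show that $\MAJ_n\circ\XOR$ has no monochromatic rectangles of measure exceeding roughly $2^{-2(1-\mathbb{H}(1/4))n}$, yielding the size bound $2^{(1-\mathbb{H}(1/4))n}\approx 2^{0.18n}$. The only (harmless) variations are cosmetic: you bound general rectangles by maximizing $\mathbb{H}(\rho)+\mathbb{H}(\tfrac12-\rho)$ at $\rho=\tfrac14$ and reduce one colour to the other via bit-flipping, whereas the paper (Theorem~\ref{thm:nolargesquares}) bounds monochromatic squares using the two-ball form of Harper's theorem and treats the two colours symmetrically.
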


It is not hard to see that $\MAJ \circ \XOR$ can be simulated by
$\MAJ \circ \MAJ$ circuits with only a \emph{linear} blow-up in size.  This immediately yields the following corollary, resolving an open question posed by Tur{\'a}n and Vatan in~\cite{TV97}.
\begin{corollary}\label{cor:main}
There exists a function that can be computed by polynomial sized $\MAJ \circ \MAJ$ circuits, but any linear decision list computing it requires exponential size.
\end{corollary}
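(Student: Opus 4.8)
The corollary is immediate once Theorem~\ref{thm:main} is paired with a matching upper bound, so the plan is simply to exhibit a polynomial-size (in fact linear-size) $\MAJ \circ \MAJ$ circuit computing the same function $f = \MAJ_n \circ \XOR$. Since the input length of $\MAJ_n \circ \XOR$ is $2n$, the bound $2^{\Omega(n)}$ of Theorem~\ref{thm:main} is exponential in the input length, whereas a circuit of size $O(n)$ is polynomial; thus this one function witnesses both halves of the statement and no separate choice of functions is needed. All the conceptual content already sits in Theorem~\ref{thm:main}, and what remains is a routine construction.

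For the upper bound I would write the input as pairs $(x_i,y_i)$ for $i \in [n]$. The top gate already computes $\MAJ_n$, so the only issue is that a single two-bit $\XOR$ is nonmonotone and takes the value $1$ only in the ``middle'', hence is not a threshold of $x_i,y_i$ and cannot be a single $\MAJ$ gate. I would resolve this with the identity $\XOR(x_i,y_i) = \OR(x_i,y_i) - \AND(x_i,y_i)$, which holds over $\{0,1\}$ because the right-hand side evaluates to $0,1,0$ as $x_i+y_i$ runs through $0,1,2$. Crucially, each two-bit gate here \emph{is} a single majority gate on three inputs: $\OR(x_i,y_i) = \MAJ(x_i,y_i,1)$, $\AND(x_i,y_i) = \MAJ(x_i,y_i,0)$, and likewise $\overline{\AND(x_i,y_i)} = \MAJ(\overline{x_i},\overline{y_i},1)$. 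So the bottom layer can be taken to consist of $2n$ majority gates.

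It then remains to realize the top gate. Substituting the identity, $\MAJ_n \circ \XOR$ outputs $1$ exactly when $\sum_{i} \OR(x_i,y_i) - \sum_i \AND(x_i,y_i) \ge n/2$; replacing each $-\AND(x_i,y_i)$ by $\overline{\AND(x_i,y_i)} - 1$ converts this into the condition $\sum_i \OR(x_i,y_i) + \sum_i \overline{\AND(x_i,y_i)} \ge 3n/2$, a comparison of a sum of $2n$ Boolean values (all with weight $+1$) against a fixed threshold of order $n$. Such a positively weighted threshold with $O(n)$ threshold is turned into a genuine majority gate by padding with $O(n)$ constant $0$- and $1$-inputs chosen so that the majority count matches the threshold (here, $n$ extra $0$-inputs, giving a $\MAJ$ over $3n$ wires with threshold $3n/2$). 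This yields a single top $\MAJ$ gate over $O(n)$ wires, and together with the $2n$ bottom gates gives a $\MAJ \circ \MAJ$ circuit of size $O(n)$, which is polynomial. The only step requiring care is the bookkeeping of the padding and the tie-breaking convention for $\MAJ_n$ (absorbed by shifting the number of constant inputs by one), and I do not anticipate any genuine obstacle there.
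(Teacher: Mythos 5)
Your proposal is correct and follows the same route as the paper: take $f = \MAJ_n \circ \XOR$, invoke Theorem~\ref{thm:main} for the exponential $\LDL$ lower bound, and observe that $\MAJ_n \circ \XOR$ has linear-size $\MAJ \circ \MAJ$ circuits. The paper merely asserts this simulation is ``not hard to see'' (pointing to \cite{Bruck90}), whereas you supply the explicit construction via $\XOR = \OR - \AND$ with constant-padding at the top gate, which checks out.
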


Impagliazzo and Williams~\cite{IW10} demonstrated a function, implicitly computable by polynomial sized $\MAJ \circ \MAJ$ circuits, which cannot be computed by polynomial sized rectangle-decision lists. We observe that our lower bound technique against linear decision lists (Lemma~\ref{lem:largerects}) coincides with the sufficient condition considered in~\cite{IW10} to prove lower bounds against rectangle-decision lists.  Thus, their function also separates linear decision lists from $\MAJ \circ \MAJ$. However, we obtain a $2^{\Omega(n)}$ lower bound on the length of linear decision lists in Theorem~\ref{thm:main}, improving upon the bound implicit in the work of Impagliazzo and Williams, which is worse in the exponent by at least a quadratic factor. 
Very recently,
Chattopadhyay, Mande and Sherif~\cite{CMS18} showed several properties of the function $\SINK \circ \XOR$. We observe that as a consequence, our lower bound technique against linear decision lists (Lemma~\ref{lem:largerects}) also applies to this function. 
We elaborate more on these remarks in Section~\ref{sec:hierarchy}.

\section{Preliminaries}

\begin{definition}[Sign function]
  The function $\sign : \R \rightarrow \bool$ is defined as 
  follows. 
\[ \sign(x) = \left\{\begin{array}{ll}
1 & \textrm{~if~} x > 0 \\
0 & \textrm{~if~} x \le 0 
\end{array}
\right.\]
\end{definition}

\begin{definition}[Linear Threshold Functions]
A function $f : \boolfn$ is said to be a linear threshold function ($\LTF$) if there exist real numbers $w_0, w_1, \dots, w_n$ such that
$f(x) = \sign \left(w_0+\sum\limits_{i = 1}^n w_ix_i\right)$.
\end{definition}
For strings $x,y\in \R^n$, we denote their inner product by $\langle
x, y\rangle \triangleq \sum_i x_i y_i$. With this notation, $f$ is an
$\LTF$ if for some $w_0\in \R$, $\tilde{w}\in \R^n$, $f(x) = \sign(w_0 +
\langle \tilde{w},x\rangle)$.

\begin{definition}[Majority]
The function $\MAJ_n : \boolfn $ is the linear threshold function defined by $\MAJ_n(x) = \sign\left(x_1 + x_2 + \cdots + x_n - n/2\right)$.
\end{definition}
\begin{definition}[Function composition]
For functions $f : \boolfn$ and $g : \bool^m \rightarrow \bool$, the function $f \circ g: \bool^{nm} \rightarrow \bool$ is  defined as follows: 
\[
f \circ g(x_{11}, \ldots, x_{1m}, \ldots, x_{n1}, \ldots, x_{nm}) = f(g(x_{11}, \ldots, x_{1m}), \cdots ,g(x_{n1}, \ldots, x_{nm})).
\]
\end{definition}

We now formally define the model of computation that is of interest in this paper.  
\begin{definition}[Linear Decision Lists]
A linear decision list ($\LDL$) of size  $k$ is a sequence $(L_1, a_1), (L_2, a_2), \ldots, (L_k, a_k)$, where each $a_i \in \bool$, 
and each $L_i$ is an $\LTF$ with $L_k$ being the constant function $1$.
The decision list computes a function $f : \boolfn$ as follows : 
If $L_1(x) = 1$, then $f(x) = a_1$; elseif $L_2(x) = 1$, then $f(x) = a_2$; elseif \ldots, elseif $L_k(x) = 1$, then $f(x) = a_k$.  That is,
\[
f(x) = \bigvee_{i = 1}^k\left(a_i \wedge  L_i(x) \wedge \bigwedge_{j < i}\neg L_j(x)\right).
\]
\end{definition}

\begin{definition}[Communication matrix]
  For a function $F : \bool^n \times \bool^n \rightarrow \bool$, its communication matrix $M_F$ is
  the $2^n \times 2^n$ matrix with entries $M_F[x, y] := F(x, y)$.
\end{definition}

\begin{definition}[Monochromatic rectangles/squares]
Let $F : \bra{0, 1}^n \times \bra{0, 1}^n \rightarrow \bra{0, 1}$ be any function.  For $b \in \bool$, a monochromatic $b$-rectangle is a tuple $(X, Y)$, where $X, Y \subseteq \bra{0, 1}^n$
and $F(x,y)=b$ for every $(x,y)\in X \times Y$. 
We say that $(X,Y)$ is a monochromatic square of size $s$ if it is a monochromatic 0-rectangle or 1-rectangle and, furthermore, $|X| = |Y| = s$. 
\end{definition}

\begin{definition}[Product distributions and weights]
A probability distribution $\eta$ over $\bool^n \times \bool^n$ is
said to be a \emph{product distribution} if there are probability
distributions $\mu$, $\nu$ over $\bool^n$ such that for every
$(x,y)\in \bool^n\times \bool^n$, $\eta(x,y) = \mu(x) \times \nu(y)$.
We say that $\eta$ is the product distribution $\mu \times \nu$. 

Given a probability distribution $\mu$ over $\bool^n$ and $X \subseteq
\bool^n$, $\mu(X)$ is defined to be the sum $\sum_{x\in X}\mu(x)$.

For a rectangle $(X,Y)$, its \emph{weight} under a product
distribution $\mu\times \nu$ is $(\mu\times \nu)(X \times Y) = \mu(X)
\times \nu(Y)$.
  
\end{definition}

We will denote the number of $1$'s in a string $x \in \bool^n$ by $|x|$.  

\begin{definition}[Hamming distance]
  The (Hamming) distance between any two strings $x, y \in \bra{0, 1}^n$, denoted $d(x, y)$, is defined as $d(x,y) \triangleq |\bra{i : x_i \neq y_i}|$.
The Hamming distance between any two sets $A, B \subseteq \bool^n$, denoted $d(A,B)$, is the minimum pairwise distance;  $d(A, B) = \min_{x \in A, y \in B}d(x, y)$.
\end{definition}

\begin{definition}[Hamming balls]
  Let $c \in \bool^n$ and $k \in \{1,\ldots , n\}$.
  A set $A \subseteq \bool^n$ is called a Hamming ball with centre $c$
  and radius $k$ if
  \[ \{s \in \bool^n \mid d(s,c) \le k-1 \} \subset A \subseteq
  \{s \in \bool^n \mid d(s,c) \le k \}. \]
  A singleton set $A=\{c\}$ is a Hamming ball with centre $c$ and radius $0$. 
\end{definition}

For a set $A \subseteq \bool^n$, the boundary of $A$ is the set
$\{s\in \bool^n \mid d(s,A) = 1 \}$. In \cite{Harper66}, Harper proved a
isoperimetry result: among all sets of a given size, Hamming balls
have the smallest boundary set size. A simplified proof was given by Frankl
and F\"{u}redi \cite{FF81}, who also stated the theorem in the
equivalent form we mention below. (See also the presentation in
\cite{Bol86}).

\begin{theorem}[Harper's Theorem]\label{thm:harper}
Let $A, B \subseteq \bra{0, 1}^n$ be non-empty sets.  Then, there exists a Hamming ball $A_0$ with centre $0^n$ and a Hamming ball $B_0$ with centre $1^n$ such that $|A_0| = |A|$, $|B_0| = B$, and $d(A_0, B_0) \geq d(A, B)$.
\end{theorem}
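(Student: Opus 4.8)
The plan is to prove this isoperimetric statement by the method of \emph{compressions} (shifting), reducing an arbitrary pair $(A,B)$ to a pair of Hamming balls through a sequence of cardinality-preserving operations that never decrease the separation $d(A,B)$. For a coordinate $i \in \bra{1,\ldots,n}$ I would introduce two operators: $C_i$, which pushes $A$ toward the corner $0^n$ by preferring the value $x_i = 0$, and $D_i$, which pushes $B$ toward $1^n$ by preferring $x_i = 1$. Concretely, splitting the cube along coordinate $i$ as $\bool^n \cong \bool^{n-1} \times \bool$ and writing $A_0, A_1 \subseteq \bool^{n-1}$ for the slices of $A$ with $x_i = 0$ and $x_i = 1$, I set the slices of $C_i(A)$ to be $A_0 \cup A_1$ and $A_0 \cap A_1$, and symmetrically for $D_i(B)$ toward the $1$-side. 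Each operator manifestly preserves cardinality.

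The key lemma is that simultaneously compressing in the same coordinate cannot bring the sets closer: $d(C_i(A), D_i(B)) \ge d(A,B)$. I would prove this by taking a closest pair $(a',b')$ in the compressed sets and exhibiting a pair in $(A,B)$ that is no farther apart, via a short case analysis on whether $a'$ and $b'$ occupy the $0$- or $1$-slice and whether they were actually moved by the compression. This local exchange argument is the technical heart of the proof and the step I expect to be most delicate, since one must weigh the case where $a'$ and $b'$ lie in opposite slices (contributing a $+1$ from coordinate $i$) against the case where they lie in the same slice.

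Next I would iterate, applying $C_i$ and $D_i$ over all coordinates $i$ repeatedly. To argue termination I would track the monovariant $\sum_{a \in A} |a| + \sum_{b \in B}(n - |b|)$, which strictly decreases whenever some compression acts nontrivially and is bounded below, so after finitely many steps both sets are stable under every $C_i$ and $D_i$. A set stable under all $C_i$ is a down-set (downward closed under the coordinatewise order), and one stable under all $D_i$ is an up-set; throughout, the cardinalities are unchanged and $d(A,B)$ has not decreased.

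The remaining obstacle, and the reason compression alone does not finish the argument, is that a down-set need not be a Hamming ball (for instance $\bra{000,100,010,110} \subseteq \bool^3$). To close this gap I would perform a secondary shifting \emph{within} the critical Hamming sphere, replacing the chosen weight-$k$ elements of the down-set by a fixed initial segment of that sphere, and verify that this rearrangement does not decrease $d(A,B)$. Here I would use the observation that, once $A$ sits near $0^n$ and $B$ near $1^n$, the distance $d(A,B)$ is governed only by the largest weight occurring in $A$ and the smallest weight occurring in $B$, together with how much their supports can overlap, so the precise choice of boundary-sphere elements is immaterial. Carrying out this final normalization yields Hamming balls $A_0$ centred at $0^n$ and $B_0$ centred at $1^n$ with $|A_0| = |A|$, $|B_0| = |B|$, and $d(A_0, B_0) \ge d(A, B)$, as required. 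An alternative that sidesteps the within-sphere exchange is an induction on $n$ that splits both sets along the last coordinate and invokes the nestedness of same-centre Hamming balls, which I would fall back on if the secondary shifting proves awkward.
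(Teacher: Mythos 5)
The paper itself offers no proof of this statement: it is imported wholesale from the literature (Harper \cite{Harper66}, with the simplified proof and this two-set formulation due to Frankl and F\"uredi \cite{FF81}, see also \cite{Bol86}), so your proposal has to be judged on its own merits. Its first three steps are correct and standard: the simultaneous coordinate compressions $C_i$, $D_i$ preserve cardinalities; the four-case exchange argument does establish $d(C_i(A),D_i(B)) \geq d(A,B)$ (in every case a witnessing pair in $A\times B$ at distance at most $d(a',b')$ exists); and your monovariant guarantees termination at a pair (down-set, up-set).

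The genuine gap is your final step, and it is not a small one: passing from a (down-set, up-set) pair to a pair of Hamming balls is the actual mathematical core of Harper's theorem, and ``secondary shifting within the critical Hamming sphere'' cannot accomplish it. A compressed down-set can have \emph{several} partial levels --- your own example $A=\{000,100,010,110\}\subseteq\bool^3$ has partial levels at weights $1$ and $2$, while the unique ball of size $4$ centred at $000$ is $\{000,100,010,001\}$ --- so reaching a ball requires moving the element $110$ \emph{down} across levels, which no rearrangement inside a single sphere can do; proving that this cross-level push-down never decreases the distance to an up-set $B$ is precisely the isoperimetric content that remains unproved. Moreover, the ``observation'' you invoke is false: with $n=2$, the down-set $A=\{00,10\}$ and the up-sets $B=\{11,10\}$, $B'=\{11,01\}$ have identical extreme weights, yet $d(A,B)=0$ while $d(A,B')=1$, so the distance is not governed by the largest weight in $A$ and smallest weight in $B$, and the choice of boundary-sphere elements is very much material (this is exactly why the theorem asserts only existence of suitable balls). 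Your fallback --- induction on $n$, slicing both sets along one coordinate, applying the hypothesis to the slice pairs, and using nestedness of same-centre balls to satisfy all four distance constraints simultaneously --- is in fact the standard and correct route, essentially the short proof in the cited literature; but as it stands you have only named it, not carried it out, so the proof is incomplete.
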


\begin{definition}[Binary Entropy]
The binary entropy function  $\mathbb{H} : [0, 1] \rightarrow [0, 1]$ is defined as follows: $\mathbb{H}(p) = -p\log p - (1-p)\log (1-p)$.
\end{definition}
\begin{fact}\label{fact:binom}
\(
\mathbb{H}(1/4) < 0.82
\). 
\end{fact}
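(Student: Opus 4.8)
The plan is to evaluate $\mathbb{H}(1/4)$ in closed form and then control the single transcendental quantity that survives. All logarithms here are base $2$, as forced by the codomain $[0,1]$ of $\mathbb{H}$. Substituting $p = 1/4$ into the definition and using $\log \tfrac14 = -2$ and $\log \tfrac34 = \log 3 - 2$, I would compute
\[
\mathbb{H}(1/4) = -\tfrac14 \log \tfrac14 - \tfrac34 \log \tfrac34 = \tfrac12 + \tfrac34(2 - \log 3) = 2 - \tfrac34 \log 3 .
\]
Thus the entire fact reduces to the inequality $2 - \tfrac34 \log 3 < 0.82$, equivalently a lower bound of the form $\log 3 > 1.5733\ldots$ on a single number.

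To keep the argument self-contained rather than appealing to a decimal value of $\log 3$, I would certify this lower bound by an exact integer comparison. Since $3^{12} = 531441 > 524288 = 2^{19}$, taking base-$2$ logarithms gives $12 \log 3 > 19$, i.e. $\log 3 > 19/12$. Feeding this into the closed form yields
\[
\mathbb{H}(1/4) = 2 - \tfrac34 \log 3 < 2 - \tfrac34 \cdot \tfrac{19}{12} = 2 - \tfrac{19}{16} = \tfrac{13}{16} = 0.8125 < 0.82 ,
\]
which is exactly the claimed inequality. The slack $0.8125 < 0.82$ is comfortable, so no sharper estimate of $\log 3$ is needed.

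I do not anticipate any genuine obstacle: the statement is a one-line numerical estimate. The only mildly delicate point is to justify the lower bound on $\log 3$ by the exact inequality $3^{12} > 2^{19}$ rather than a floating-point figure, so that the whole derivation stays elementary and verifiable by hand.
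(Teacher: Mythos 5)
Your proof is correct, and every step checks out: $\mathbb{H}(1/4) = 2 - \tfrac{3}{4}\log 3$, the integer comparison $3^{12} = 531441 > 524288 = 2^{19}$ gives $\log 3 > 19/12$, and hence $\mathbb{H}(1/4) < 13/16 = 0.8125 < 0.82$. The paper states this fact without any proof at all (treating it as a routine numerical estimate), so your self-contained, hand-verifiable certification is if anything more than what the paper provides; there is no gap and no divergence of approach to report.
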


\section{Linear decision lists contain large monochromatic rectangles}

The argument of Tur{\'a}n and Vatan from \cite{TV97} implicitly showed that
any function $f : \bool^n \times \bool^n \to \bool$ with no large monochromatic squares cannot be computed by small linear decision lists. Their argument was presented specific to the Inner Product function (Theorem~1 in \cite{TV97}).
However, it is not too hard to see
that their proof in fact works for any function as long as it has no large
monochromatic squares. 
In this section, we generalize their argument to show that all functions computable by small size linear decision lists must contain,
under \emph{any product distribution},
a  monochromatic rectangle  of large weight with respect to that distribution.

We first establish a technical lemma that can be seen as a generalization of Lemma 2 in \cite{TV97}.

\begin{lemma}\label{lem:norects}
  Let $f$ be an $\LTF$ over the input variables $x_1, \ldots, x_n, y_1, \ldots, y_n$. Let $\mu, \nu$ be distributions over $\bool^n$, and $X, Y \subseteq \bool^n$. Define $m := \min\bra{\mu(X), \nu(Y)}$, and let $t \in (0, m]$. Then,
one of the following is true.
\begin{enumerate}
 \item There exists a monochromatic 1-rectangle $(X', Y')$ within $X \times Y$ (i.e., $X' \subseteq X$ and $Y' \subseteq Y$) such that $\mu(X') \geq t$ and $\nu(Y') \geq t$.
 \item There exists a monochromatic 0-rectangle $(X', Y')$ within $X \times Y$ such that $\mu(X') > m - t$ and $\nu(Y') > m -t$.
\end{enumerate}
\end{lemma}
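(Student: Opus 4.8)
The plan is to exploit the fact that, since $f$ is an $\LTF$ on the $2n$ variables, its value depends on $x$ and $y$ only through two real-valued ``scores''. Writing $f(x,y) = \sign(w_0 + \langle u,x\rangle + \langle v,y\rangle)$ and setting $g(x) := \langle u,x\rangle$, $h(y) := \langle v,y\rangle$, and $\theta := -w_0$, we have $f(x,y) = 1$ exactly when $g(x) + h(y) > \theta$. Thus the whole problem becomes one-dimensional: I only need to separate $X$ and $Y$ by thresholds on the scores $g$ and $h$.

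Concretely, for a parameter $\alpha \in \R$ (with the matching threshold $\theta - \alpha$ on the $y$-side) I would form the candidate $1$-rectangle $X_1 := \{x \in X : g(x) > \alpha\}$, $Y_1 := \{y \in Y : h(y) > \theta - \alpha\}$ and the candidate $0$-rectangle $X_0 := X \setminus X_1 = \{x \in X : g(x) \le \alpha\}$, $Y_0 := \{y \in Y : h(y) \le \theta - \alpha\}$. The matching sums make both genuinely monochromatic: on $X_1 \times Y_1$ we get $g(x)+h(y) > \theta$, and on $X_0 \times Y_0$ we get $g(x)+h(y) \le \theta$. Introduce the two one-sided weight functions $p(\alpha) := \mu(X_1)$ and $q(\alpha) := \nu(Y_1)$; then $\mu(X_0) = \mu(X) - p(\alpha) \ge m - p(\alpha)$ and $\nu(Y_0) = \nu(Y) - q(\alpha) \ge m - q(\alpha)$, using $\mu(X),\nu(Y)\ge m$. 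With this, conclusion 1 follows whenever $p(\alpha) \ge t$ and $q(\alpha) \ge t$, while conclusion 2 follows whenever $p(\alpha) < t$ and $q(\alpha) < t$ (here $p(\alpha) < t$ forces $\mu(X_0) > m-t$ strictly, and symmetrically for $q$). So it suffices to find a single $\alpha$ at which $p(\alpha)$ and $q(\alpha)$ lie on the same side of $t$.

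The main obstacle is exactly this ``same side'' requirement: as $\alpha$ increases, $p$ is non-increasing and $q$ is non-decreasing, and at the extremes they sit on opposite sides ($p(-\infty) = \mu(X) \ge t$ while $q(-\infty)=0 < t$, and dually at $+\infty$), so a careless threshold choice always lands in the mixed regime, and one must rule out a crossover where $p$ drops below $t$ and $q$ rises above $t$ at the same value of $\alpha$. I would resolve this by tracking one-sided continuity: since $g,h$ take finitely many values on the finite sets $X,Y$, the function $p(\alpha) = \mu(\{g > \alpha\})$ is right-continuous and $q(\alpha) = \nu(\{h > \theta-\alpha\})$ is left-continuous. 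Hence $P := \{\alpha : p(\alpha) \ge t\}$ is a \emph{closed} ray $(-\infty, c_1]$ and $Q := \{\alpha : q(\alpha) \ge t\}$ is a \emph{closed} ray $[c_2, \infty)$, both nonempty and with finite endpoints by the extreme-value computation above (using $0 < t \le m$). If $P \cap Q \ne \emptyset$, i.e. $c_2 \le c_1$, any $\alpha$ in the overlap gives conclusion 1; otherwise $c_1 < c_2$ and any $\alpha$ in the open gap $(c_1, c_2)$ has $p(\alpha) < t$ and $q(\alpha) < t$, giving conclusion 2. Because both rays are closed, these two alternatives are exhaustive, which is precisely what eliminates the simultaneous-jump worry. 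The only routine checks left are the monotonicity and one-sided continuity of $p$ and $q$ and the boundary evaluations as $\alpha \to \pm\infty$.
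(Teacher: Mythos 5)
Your reduction to a one-dimensional threshold sweep is sound: the monochromaticity of $X_1\times Y_1$ and $X_0\times Y_0$, the monotonicity of $p$ and $q$, the one-sided continuity claims, and the boundary evaluations are all correct. The genuine error is the topological conclusion you draw from them. For a \emph{non-increasing} function, closed superlevel sets $\{p\ge t\}$ require upper semi-continuity, which for monotone functions means \emph{left}-continuity; your $p$ (defined via the strict inequality $g(x)>\alpha$) is right-continuous, and this makes $P=\{\alpha : p(\alpha)\ge t\}$ an \emph{open} ray, not a closed one (dually, your left-continuous $q$ makes $Q$ open). Concretely, take $X=\{x_0\}$, $\mu(x_0)=1$, $g(x_0)=0$: then $p(\alpha)=1$ for $\alpha<0$ and $p(\alpha)=0$ for $\alpha\ge 0$, so $P=(-\infty,0)$ has no maximal element. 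Because of this, your two cases are not exhaustive. In the example $X=\{x_0\}$, $Y=\{y_0\}$, $g(x_0)=h(y_0)=0=\theta$, $\mu(x_0)=\nu(y_0)=1$, one gets $P=(-\infty,0)$ and $Q=(0,\infty)$: these are disjoint, yet $c_1=c_2=0$, so the ``open gap $(c_1,c_2)$'' is empty and your argument produces no $\alpha$ at all --- even though $\alpha=0$ itself witnesses conclusion 2. This tie situation is exactly the ``simultaneous crossover'' you flagged as the main obstacle, and the closedness claim that was supposed to eliminate it is false.

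The repair is small and uses precisely the facts you already isolated: since $p$ and $q$ are step functions (finitely many values) with the one-sided continuity you proved, both $P$ and $Q$ are open rays; two disjoint nonempty open sets cannot cover the connected line $\R$, so if $P\cap Q=\emptyset$ there is some $\alpha^*\notin P\cup Q$ (the common boundary point $\sup P=\inf Q$ always works), and there $p(\alpha^*)<t$ and $q(\alpha^*)<t$, giving conclusion 2. With that fix your proof is correct, and it is a genuinely different route from the paper's: the paper sorts the rows and columns of the finite submatrix on $X\times Y$ by their scores $a+\langle \alpha\cdot x\rangle$ and $\langle \beta\cdot y\rangle$, takes the minimal prefixes whose weights reach $t$, and branches on the single entry $[i,j]$ --- prefixes of a sorted order handle ties automatically, so no continuity or topology is needed. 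Your sweep is a natural continuous analogue, but the discrete prefix argument is what lets the paper dispose of exactly the boundary case where your write-up breaks.
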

\begin{proof}
Let $M$ be the submatrix of $M_f$ restricted to $X \times Y$.  Let the $\LTF~f$ be given by $\sign(a + \langle \alpha \cdot x\rangle + \langle \beta \cdot y\rangle) $.  
Reorder the rows and columns of $M$ in decreasing order of $a+\langle \alpha \cdot x\rangle$ and $\langle \beta \cdot y\rangle$ to get the matrix $B = R \times C$.
Let $i$ denote the least index of a row in $B$ such that $\mu(\bra{R_1, \dots R_i}) \geq t$, and $j$ denote the least index of a column in $B$ such that $\mu(\bra{C_1, \dots C_j}) \geq t$.
Note that these indices are well-defined since $t \in (0, m]$.
If the $[i, j]$'th entry of $B$ is 1, then the top-left submatrix of $B$ satisfies item~(1) in the lemma.
If the $[i, j]$'th entry of $B$ is 0, then the bottom-right submatrix of $B$ satisfies item~(2) in the lemma.
\end{proof}

We now prove the main lemma. 

\begin{lemma}\label{lem:largerects}
Let $\mu, \nu$ be distributions on $\bool^n$. Let $f : \bool^n \times \bool^n \to \bool$ be any function with no monochromatic rectangle of weight greater than $w$ under the distribution $\mu \times \nu$.
Then, any linear decision list computing $f$ must have size at least $1/\sqrt{w}$.
\end{lemma}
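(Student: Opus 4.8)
The plan is to prove the statement in the following quantitative, contrapositive form: I will show that any linear decision list of size $k$ computing $f$ produces a monochromatic rectangle of weight at least $1/k^2$ under $\mu \times \nu$. Combined with the hypothesis that every monochromatic rectangle has weight at most $w$, this forces $1/k^2 \le w$, i.e.\ $k \ge 1/\sqrt{w}$, which is exactly the claim. The engine is Lemma~\ref{lem:norects}, applied repeatedly to the successive queries of the list while I maintain a shrinking rectangle on which all queries examined so far evaluate to $0$, so that the list has not yet produced an output on it.

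Concretely, let $(L_1, a_1), \ldots, (L_k, a_k)$ be an $\LDL$ of size $k$ computing $f$, recall that $L_k \equiv 1$, and fix the threshold parameter $t := 1/k$. I would process the queries $L_1, \ldots, L_{k-1}$ in order, maintaining a rectangle $X_i \times Y_i$ with the invariant that $L_1 = \cdots = L_i = 0$ everywhere on it, starting from $X_0 = Y_0 = \bool^n$ so that $m_0 := \min\{\mu(X_0), \nu(Y_0)\} = 1$. At step $i$ I apply Lemma~\ref{lem:norects} to the $\LTF~L_i$ on $X_{i-1} \times Y_{i-1}$ with this $t$ and $m = m_{i-1}$. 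If the first alternative occurs, the resulting monochromatic $1$-rectangle $(X', Y')$ has $\mu(X') \ge t$ and $\nu(Y') \ge t$; on it $L_i = 1$ while $L_1 = \cdots = L_{i-1} = 0$, so the list outputs the constant $a_i$, yielding a monochromatic rectangle for $f$ of weight at least $t^2 = 1/k^2$, and I stop. Otherwise the second alternative gives a monochromatic $0$-rectangle $(X_i, Y_i)$ with both marginals exceeding $m_{i-1} - t$; this preserves the invariant and sets $m_i > m_{i-1} - t$, so I continue.

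If no $1$-rectangle is ever produced, then after all $k-1$ steps I reach a rectangle $X_{k-1} \times Y_{k-1}$ on which $L_1 = \cdots = L_{k-1} = 0$; since $L_k \equiv 1$, the list outputs $a_k$ there, so this rectangle is again monochromatic for $f$. The recurrence $m_i > m_{i-1} - t$ telescopes to $m_{k-1} > 1 - (k-1)t = 1/k$, whence the weight of this rectangle exceeds $(1/k)^2 = 1/k^2$. Thus in every case a monochromatic rectangle of weight at least $1/k^2$ is produced, which completes the argument. The point that needs care, and which I expect to be the main obstacle, is verifying that Lemma~\ref{lem:norects} is legitimately applicable at each step, namely that the feasibility condition $t \le m_{i-1}$ holds throughout; this is precisely where the uniform choice $t = 1/k$ pays off, since the telescoping estimate gives $m_{i-1} > 1 - (k-2)t = 2/k > t$ for every $i \le k-1$, keeping $t$ inside the allowed range $(0, m_{i-1}]$ while still leaving the final surviving rectangle large. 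Balancing these two competing demands, large $1$-rectangles versus a not-too-depleted final rectangle, is the crux: any $t \le 1/(k-1)$ works, and the symmetric choice $t = 1/k$ makes both lower bounds equal to $1/k^2$.
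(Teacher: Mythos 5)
Your proof is correct and takes essentially the same route as the paper: both iterate Lemma~\ref{lem:norects} over the queries, maintaining a rectangle on which all queries processed so far evaluate to $0$ with marginals shrinking by at most $t$ per step, and both invoke $L_k \equiv 1$ at the end to obtain a final monochromatic rectangle. The only difference is presentational—you fix $t = 1/k$ and argue contrapositively that any size-$k$ list yields a monochromatic rectangle of weight at least $1/k^2$, whereas the paper argues by contradiction with an arbitrary $t \in (\sqrt{w}, 1/k]$.
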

\begin{proof}
  Towards a contradiction, let $(L_1, a_1), (L_2, a_2), \ldots, (L_k, a_k)$ be an $\LDL$ of size $k$ computing $f$, where $k < 1/\sqrt{w}$. Pick any $t \in (\sqrt{w},1/k]$.  We construct, for each $i \in [k - 1]$, a rectangle $S_i = X_i \times Y_i$ which is a $0$-rectangle for all $L_j$ with $j \leq i$, and furthermore $\mu(X_i), \nu(Y_i) \geq 1 - i\cdot t$.  
We proceed by induction on $i$. 

  For the base case $i = 1$, let $S_0 =  (X_0,Y_0)$ be the entire $2^n \times 2^n$ matrix. Suppose $S_0$ has a rectangle $(X',Y')$ that is a $1$-rectangle of $L_1$ and moreover, $\mu(X')\geq t$, $\nu(Y') \geq t$. Then everywhere in this rectangle, $f$ will be $a_1$. But $f$ has no monochromatic rectangle of weight as large as $t^2 > w$. So $S_0$ has no rectangle $(X',Y')$ with $\mu(X') \geq t$, $\nu(Y') \geq t$ that is a $1$-rectangle of $L_1$.  By Lemma~\ref{lem:norects}, $S_0$ must then contain a $0$-rectangle $(X_1,Y_1)$ of $L_1$  such that both $\mu(X_1)$ and $\nu(Y_1)$ are at least $1-t$. This establishes the base case.

For the inductive step, we have a rectangle $S_{i-1} = (X_{i-1}, Y_{i-1})$ which is a $0$-rectangle for $L_1, L_2, \ldots ,L_{i-1}$ and, moreover, $\min\{\mu(X_{i-1}),\nu(Y_{i-1})\} \geq 1 - (i-1)t$. Within this rectangle, suppose $L_i$ has a $1$-rectangle $(X',Y')$ such that $\mu(X')\geq t$ and $\nu(Y') \geq t$. 
Then $f = a_i$ in this rectangle, giving a monochromatic rectangle of $f$ of weight greater than $w$. But we know that such rectangles do not exist.
Since $kt\le 1$ and $i < k$, we have $t \le 1-(i-1)t$ and hence Lemma~\ref{lem:norects} is applicable. Hence we conclude that $S_{i-1}$ must contain a $0$-rectangle $(X_i,Y_i)$ of $L_i$ with $\min\{\mu(X_i),\nu(Y_i)\} \geq 1 - (i-1)t -t = 1 -it$. Since this rectangle, say $S_i$, is contained in $S_{i-1}$, it is a $0$-rectangle for all $L_j$ with $j \leq i$.

Thus, we have a rectangle $S_{k-1} = (X_{k-1},Y_{k-1})$ on which $L_1, L_2, \ldots , L_{k-1}$ are $0$, and $L_k=1$ because $L_k$ is the constant function $1$. Furthermore, $\mu(X_{k-1})$ and  $\nu(Y_{k-1}) \geq 1 - (k-1)t$.  Everywhere on this rectangle, $f$ evaluates to $a_k$. So $S_{k-1}$ is a monochromatic rectangle for $f$. Hence it cannot have weight more than $w$. Thus $1-(k-1)t \le \sqrt{w} < t$; that is, $1 < kt$, contradicting our choice of $t$.
\end{proof}

\section{$\MAJ \circ \XOR$ has no large monochromatic squares}
In this section, we show an upper bound and a matching tight lower bound on the size of a largest monochromatic square in the communication matrix of the $\MAJ \circ \XOR$ function.
\begin{definition}[$\XOR$ functions]
  \label{def:xor-fns}
For a function $f : \boolfn$, let $f \circ \XOR$ denote the function defined by $f \circ \XOR(x_1, \ldots, x_n, y_1, \ldots y_n) = f(x_1 \oplus y_1, \ldots, x_n \oplus y_n)$.
\end{definition}

\begin{lemma}
  Let $F : \bool^n \times \bool^n \rightarrow \bool$ be the function $\MAJ_n \circ \XOR$. Then, for any $b \in \bool$, $M_F$ has a monochromatic $b$-square of size at least $\sum\limits_{i = 0}^{\lfloor n/4 \rfloor}\binom{n}{i}$.
\end{lemma}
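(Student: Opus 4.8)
The plan is to exhibit explicit monochromatic squares built from Hamming balls, exploiting the fact that for $F = \MAJ_n \circ \XOR$ we have $F(x,y) = \MAJ_n(x \oplus y) = \sign(|x \oplus y| - n/2)$, so that $F(x,y) = 0$ exactly when $d(x,y) \le n/2$ and $F(x,y) = 1$ exactly when $d(x,y) > n/2$. Thus building a monochromatic $b$-square amounts to finding two sets of strings all of whose cross Hamming distances lie on one side of $n/2$. Since a Hamming ball of radius $\lfloor n/4 \rfloor$ has exactly $\sum_{i=0}^{\lfloor n/4 \rfloor}\binom{n}{i}$ elements, balls of this radius are the natural candidates, and the whole argument reduces to two triangle-inequality estimates.

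For the $0$-square I would take $X = Y = A$, where $A$ is the Hamming ball of radius $\lfloor n/4 \rfloor$ about $0^n$. For any $x, y \in A$ the triangle inequality gives $d(x,y) \le d(x, 0^n) + d(0^n, y) = |x| + |y| \le 2\lfloor n/4 \rfloor \le n/2$, so $F(x,y) = 0$ on all of $X \times Y$. Both sides have cardinality $\sum_{i=0}^{\lfloor n/4 \rfloor}\binom{n}{i}$, which settles the case $b = 0$.

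For the $1$-square I would take $X$ to be the ball of radius $\lfloor n/4 \rfloor$ about $0^n$ and $Y$ the ball of the same radius about $1^n$; these again have the required size. Here the useful estimate runs the other way: applying the triangle inequality to the chain $0^n, x, y, 1^n$ gives $n = d(0^n, 1^n) \le d(0^n, x) + d(x,y) + d(y, 1^n) \le \lfloor n/4 \rfloor + d(x,y) + \lfloor n/4 \rfloor$, so $d(x,y) \ge n - 2\lfloor n/4 \rfloor$ for every $x \in X$ and $y \in Y$, and one then checks that this lower bound exceeds $n/2$, forcing $F(x,y) = 1$ throughout $X \times Y$.

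The one genuinely delicate point, and the step I expect to be the main obstacle, is exactly this last strict inequality for the $1$-square: $F$ uses the strict threshold $|z| > n/2$, whereas the $0$-square only needs $|z| \le n/2$. The quantity $n - 2\lfloor n/4 \rfloor$ comfortably exceeds $n/2$ when $n \not\equiv 0 \pmod 4$, but equals $n/2$ exactly (hence fails to be strictly above it) when $4 \mid n$, so that residue class has to be treated by a separate argument rather than by the plain concentric-ball estimate. I would handle it either by restricting attention to $n$ outside this class or by a minor adjustment of the two radii that still meets the target size $\sum_{i=0}^{\lfloor n/4 \rfloor}\binom{n}{i}$. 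The remaining ingredients, namely the identity $|B(c, \lfloor n/4 \rfloor)| = \sum_{i=0}^{\lfloor n/4 \rfloor}\binom{n}{i}$ and the two distance estimates above, are entirely routine.
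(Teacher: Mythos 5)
Your construction is exactly the paper's: their $X = Y = \{x : |x| \le \lfloor n/4\rfloor\}$ and $Z = \{z : |z| \ge n - \lfloor n/4\rfloor\}$ are precisely your Hamming balls of radius $\lfloor n/4\rfloor$ centred at $0^n$ and $1^n$, and your two triangle-inequality estimates are the intended justification. Your $b=0$ case is complete and correct for every $n$.

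The delicate point you flag for the $1$-square is a genuine problem, and you caught something the paper itself glosses over: the paper simply asserts that $F(x,z)=1$ on all of $X\times Z$, which is false when $n \equiv 0 \pmod 4$. For $n=4$, take $x = 1000 \in X$ and $z = 1011 \in Z$; then $x \oplus z = 0011$ has weight exactly $n/2$, so $F(x,z)=0$. However, your second fallback, a ``minor adjustment of the two radii that still meets the target size,'' cannot succeed: when $4 \mid n$ there is \emph{no} monochromatic $1$-square of size $\sum_{i=0}^{n/4}\binom{n}{i}$ at all. Indeed, if $A \times B$ were such a square, then $d(A,B) > n/2$, and Theorem~\ref{thm:harper} would yield Hamming balls $A_0$ (centre $0^n$) and $B_0$ (centre $1^n$) of the same sizes with $d(A_0,B_0) > n/2$; but the only ball around $0^n$ of size exactly $\sum_{i=0}^{n/4}\binom{n}{i}$ is $\{x : |x| \le n/4\}$, likewise $B_0 = \{z : |z| \ge 3n/4\}$, and these two sets have distance exactly $n/2$ (nest the supports), a contradiction --- this is just the argument of Theorem~\ref{thm:nolargesquares} run at the boundary. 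So for $b=1$ and $n \equiv 0 \pmod 4$ the lemma as stated is false, and your first option is the only viable repair: either exclude that residue class, or weaken the bound there (radius $n/4 - 1$ balls give cross distance at least $n/2+2 > n/2$, hence a $1$-square of size $\sum_{i=0}^{n/4-1}\binom{n}{i}$). None of this affects the paper's main results, since the lemma is used only to argue tightness of Theorem~\ref{thm:nolargesquares} and plays no role in the proof of Theorem~\ref{thm:main}.
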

\begin{proof}
  Define the sets $X,Y,Z$ as follows:  
  \begin{eqnarray*}
    X = Y &=& \bra{x \in \bool^n : |x| \leq \lfloor n/4 \rfloor}. \\
  Z & = & \bra{x \in \bool^n : |x| \geq n - \lfloor n/4 \rfloor}.
  \end{eqnarray*}
  Note that $F(x, y) = 0$ for all $x \in X, y \in Y$, and $F(x, z) =
  1$ for all $x \in X$, $z \in Z$. Thus $(X,Y)$ and $(X,Z)$ are a
  monochromatic 0-square and 1-square, respectively, each of size
  $\sum\limits_{i = 0}^{\lfloor n/4 \rfloor}\binom{n}{i}$.
\end{proof}

\begin{remark}
  \label{rem:monochromatic-recgtangle-Maj-Xor}
  We remark that when $n \equiv 3 \pmod{4}$ the above construction can be improved if we consider monochromatic rectangles. That is, for any $b \in \bool$, $M_F$ has a monochromatic $b$-rectangle $(X_1,X_2)$ such that $|X_1| = \sum\limits_{i = 0}^{\lceil n/4 \rceil}\binom{n}{i}$ and $|X_2| = \sum\limits_{i = 0}^{\lfloor n/4 \rfloor}\binom{n}{i}$. Indeed, let $X = \bra{x \in \bool^n : |x| \leq \lceil n/4 \rceil}$, $Y =\{x \in \bool^n : |x| \leq \lfloor n/4 \rfloor\}$  and $Z = \bra{x \in \bool^n : |x| \geq n - \lfloor n/4 \rfloor}$. Then, it is easily seen
  that $(X,Z)$ (resp., $(X,Y)$) is a monochromatic $1$-rectangle (resp., $0$-rectangle) of the claimed size. 
\end{remark}

We now show that this bound is tight.

\begin{theorem}\label{thm:nolargesquares}
Let $F : \bool^n \times \bool^n \rightarrow \bool$ be the function $\MAJ_n \circ \XOR$.  For any $n$, $M_F$ has no monochromatic squares of size greater than $\sum\limits_{i = 0}^{\lceil n/4 \rceil}\binom{n}{i}$.
\end{theorem}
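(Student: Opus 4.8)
The plan is to translate the statement about $\MAJ_n \circ \XOR$ into a purely combinatorial claim about Hamming distances and then invoke Harper's isoperimetric theorem. First I would observe that $F(x,y) = \MAJ_n(x \oplus y)$ equals $1$ precisely when $d(x,y) > n/2$; thus a monochromatic $1$-square $(X,Y)$ of size $s$ is exactly a pair of sets with $|X| = |Y| = s$ and $d(X,Y) \ge \lfloor n/2 \rfloor + 1$, while a monochromatic $0$-square satisfies $d(x,y) \le \lfloor n/2 \rfloor$ for all pairs. To put the $0$-square on the same footing, I would complement one side: writing $\overline{Y} = \{\overline{y} : y \in Y\}$ and using $d(x,\overline{y}) = n - d(x,y)$, the condition $d(x,y) \le \lfloor n/2 \rfloor$ becomes $d(x,\overline{y}) \ge \lceil n/2 \rceil$, with $|\overline{Y}| = |Y|$. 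So both cases reduce to the same shape: \emph{two sets $A, B \subseteq \bool^n$ with $|A| = |B| = s$ and $d(A,B) \ge D$}, where $D = \lfloor n/2 \rfloor + 1$ for a $1$-square and $D = \lceil n/2 \rceil$ for a $0$-square.

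Next I would apply Harper's Theorem (Theorem~\ref{thm:harper}) to replace $A, B$ by Hamming balls $A_0$ centred at $0^n$ and $B_0$ centred at $1^n$ with $|A_0| = |B_0| = s$ and $d(A_0,B_0) \ge d(A,B) \ge D$. Let $r$ be the largest weight occurring in $A_0$; by the $0^n \leftrightarrow 1^n$ symmetry (complementation maps a ball about $0^n$ of a given size to a ball about $1^n$ of the same size), the smallest weight occurring in $B_0$ is exactly $n - r$. The crux is then to upper bound $d(A_0,B_0)$ in terms of $r$. I would do this by exhibiting one close pair: take $a \in A_0$ of weight $r$ (which exists by the choice of $r$) and pick $b$ of weight $n - r + 1$ whose support contains that of $a$; this $b$ lies in $B_0$ because $B_0$ contains \emph{all} strings of weight at least $n - r + 1$, and hence $d(A_0,B_0) \le d(a,b) = n - 2r + 1$, valid whenever $2r \le n + 1$. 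The remaining case $2r > n+1$ I would dispose of separately: then $2r \ge n + 2$, so some weight $w$ satisfies $n - r + 1 \le w \le r - 1$, and every string of that weight lies in both $A_0$ and $B_0$, forcing $d(A_0,B_0) = 0 < D$, a contradiction.

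Combining $D \le d(A_0,B_0) \le n - 2r + 1$ yields $r \le (n + 1 - D)/2$. Plugging in the two values of $D$ and checking the four residues of $n$ modulo $4$, each bound gives $r \le \lceil n/4 \rceil$. Since $A_0$ is a Hamming ball about $0^n$ whose maximum weight is $r$, we have $s = |A_0| \le \sum_{i=0}^{r}\binom{n}{i} \le \sum_{i=0}^{\lceil n/4 \rceil}\binom{n}{i}$, the desired bound for both square types.

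I expect the main obstacle to lie in bounding $d(A_0,B_0)$: the Hamming balls are in general \emph{partial} at their outermost layer, so one cannot simply choose two extremal strings of weights $r$ and $n - r$ with nested supports. The trick of dropping down one layer on the $B_0$ side (to weight $n - r + 1$, which is fully present) is what makes the support-containment available, at the cost only of a harmless additive $+1$ that the ceiling in $\lceil n/4 \rceil$ absorbs. Getting this constant exactly right, together with the modular case analysis that lands precisely on $\lceil n/4 \rceil$, is the delicate part of the argument.
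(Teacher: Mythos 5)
Your proof is correct and follows essentially the same route as the paper: translate monochromaticity into a Hamming-distance condition, invoke Harper's Theorem to replace the two sets by balls around $0^n$ and $1^n$, and then upper-bound the distance between two such balls of equal size. The one substantive difference is worth recording. The paper spells out only the $1$-square case, using the weaker estimate that radius $> \lceil n/4\rceil$ forces $d(A_0,B_0) \le n - 2\lceil n/4\rceil \le \lfloor n/2\rfloor$, and dismisses the $0$-square case with ``a similar argument''; but mirroring that estimate verbatim breaks down when $n \equiv 0 \pmod 4$, since there $n - 2\lceil n/4\rceil = n/2 = \lceil n/2\rceil$, which does \emph{not} contradict the bound $d \ge \lceil n/2\rceil$ one gets after complementing one side of a $0$-square. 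Your sharper bound $d(A_0,B_0) \le n - 2r + 1$ --- obtained by pairing a point on the (possibly partial) outermost layer of $A_0$ with the full layer of $B_0$ at weight $n-r+1$ --- is exactly what is needed to close that residue class, so your unified treatment of both square types is in fact more complete than the paper's sketch. One trivial nitpick: when $r = 0$ your witness $b$ of weight $n - r + 1 = n+1$ does not exist, but then $|A_0| = 1$ and the claimed size bound holds vacuously.
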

\begin{proof}
Suppose, to the contrary, that there are sets $A, B \subseteq \bool^n$ such that $|A| = |B| > \sum\limits_{i = 0}^{\lceil n/4\rceil}\binom{n}{i}$ and $A \times B$ is a monochromatic $1$-square in $M_F$. By the definition of $F$, this implies $d(A, B) > \lfloor n/2 \rfloor$.
By Theorem \ref{thm:harper}, there exist Hamming balls $A_0$ around $0^n$, and $B_0$ around $1^n$ such that $|A_0| = |A|, |B_0| = |B|$ and $d(A_0, B_0) \geq d(A, B)$.
The size lower bound enforces that the radius of $A_0$ and $B_0$ must be greater than $\lceil n/4 \rceil$, and since they are centered on
$0^n$ and $1^n$, it follows that $d(A_0,B_0) \leq \lfloor n/2 \rfloor$.
But then $ d(A,B)$ is also at most $\lfloor n/2 \rfloor$.
Hence, there exist $x \in A, y \in B$ such that $d(x, y) \leq \lfloor n/2 \rfloor$, which means $F(x,y) = \MAJ_n \circ \XOR(x, y) = 0$, which contradicts our assumption. 
Therefore, any monochromatic $1$-square in
$M_F$ has size at most $\sum\limits_{i = 0}^{\lceil n/4 \rceil}\binom{n}{i}$.

A similar argument
shows the same upper bound on the size of monochromatic $0$-squares. 
\end{proof}

Now we can put things together to prove our main theorem. 
\begin{proof}[Proof of Theorem \ref{thm:main}]
  Let $s_n$ be the minimum size of an $\LDL$ computing $\MAJ_n \circ \XOR$.
  Further let $\mu$ and $\nu$ be uniform distributions over $\bool^n$.
  Then, by Lemma \ref{lem:largerects} and Theorem \ref{thm:nolargesquares},
  for all $n$ sufficiently large, 
\begin{align*}
s_n & \geq \frac{2^n}{\sum_{i = 0}^{\lceil n/4 \rceil}\binom{n}{i}}\\
& \geq \frac{2^n}{2^{n \cdot H(1/4) }}\tag*{using Stirling's approximation}\\
& \geq 2^{0.18 n}. \tag*{using Fact \ref{fact:binom}}
\end{align*}
\end{proof}

\section{$\LDL$s and the threshold circuit hierarchy}\label{sec:hierarchy}

In this section, we see how the class of functions computable by polynomial sized $\LDL$s fits into the low depth threshold circuit hierarchy.  
The reader is referred to Razborov's survey \cite{Raz92} for a detailed exposition on the low depth threshold circuits hierarchy.

\subsection{Definitions}

\begin{definition}[$\MAJ$]
Define $\MAJ$  to be the class of all functions computable by polynomial sized $\MAJ$  gates. Each input to the $\MAJ$ gate may be a constant 0 or 1, or a variable $x_i$, or its negation $\neg x_i$. 
\end{definition}
    
\begin{definition}[$\LTF$]
Define $\LTF$ to be the class of all functions computable by 
$\LTF$ gates.
\end{definition}

\begin{definition}[$\LDL$]
Define $\LDL$ to be the class of all functions computable by polynomial sized linear decision lists.
\end{definition}

\begin{definition}[$\wh\LDL$]
Define $\wh\LDL$ to be the class of all functions computable by
polynomial sized linear decision lists where, furthermore, weights of
the linear threshold queries are integers with values bounded by a
polynomial in the number of variables.
\end{definition}

\begin{definition}[Depth-2 classes]
For classes of functions $\mathcal{C}, \mathcal{D}$, define $\mathcal{C} \circ \mathcal{D}$ to be the class of functions computable by polynomial-sized depth-2 circuits, where the top  gate computes a function from the class $\mathcal{C}$, and the bottom layer contains gates computing functions in $\mathcal{D}$.
\end{definition}

\begin{definition}[$\wh{\PT_1}$]
The class $\wh{\PT_1}$ consists of all functions $f : \boolfn$ which can be represented by polynomial sized $\MAJ \circ \PAR$ circuits.
\end{definition}

\begin{definition}[$\PT_1$]
The class $\PT_1$ consists of all functions $f : \boolfn$ which can be represented by polynomial sized $\LTF \circ \PAR$ circuits.
\end{definition}

(These are precisely the classes of polynomial threshold functions \cite{Bruck90}; it is
more convenient for us here to use the equivalent formulation as
depth-2 circuits.)

In order to define classes given by the spectral representation of functions, we first recall a few preliminaries from Boolean function analysis.

Consider the real vector space of functions from $\bool^n \rightarrow \R$, equipped with the following inner product.
\[
\langle f, g \rangle = \frac{1}{2^n}\sum\limits_{x \in \bool^n}f(x)g(x) = \E_{x \in \bool^n}[f(x)g(x)].
\]
For each $S \subseteq [n]$, define $\chi_S : \bool^n \rightarrow \bra{-1, 1}$ by $\chi_S(x) = (-1)^{\sum_{i \in S}x_i}$.
It is not hard to verify that $\bra{\chi_S : S \subseteq [n]}$ forms an orthonormal basis for this vector space.
Thus, every $f: \bool^n \rightarrow \R$ has a unique representation as $f= \sum\limits_{S \subseteq [n]}\wh{f}(S)\chi_S$, where
\begin{align*}
\wh{f}(S) = \langle f, \chi_S\rangle = \E_{x \in \bool^n}[f(x)\chi_S(x)].
\end{align*}

\begin{definition}[$\PL_1$]
The class $\PL_1$ consists of all functions $f : \boolfn$ for which $\sum\limits_{S \subseteq [n]}|\wh{f}(S)| \leq \mathrm{poly}(n)$.
\end{definition}

\begin{definition}[$\PL_\infty$]
The class $\PL_\infty$ consists of all functions $f : \boolfn$ for which $\max\limits_{S \subseteq [n]}|\wh{f}(S)| \geq \frac{1}{\mathrm{poly}(n)}$.
\end{definition}

\begin{figure}
\begin{center}
\begin{tikzpicture}[transform shape]

\node (pl1) at (0,0) {$\PL_1$};
\node (pt1hat) at (0, 2) {$\widehat{\PT_1}$};
\node (pt1) at (0,4) {$\PT_1$};
\node (plinfinity) at (0,6) {$\PL_\infty$};
\node (maj) at (6, 0) {$\MAJ$};
\node (thr) at (6, 2) {$\LTF$};
\node (majomaj) at (6, 4) {$\MAJ \circ \MAJ$};
\node (thromaj) at (6, 6) {$\LTF \circ \MAJ$};
\node (throthr) at (6, 8) {$\LTF \circ \LTF$};
\node (ldl) at (12, 4) {$\LDL$};
\node (mdl) at (12,2) {$\wh{\LDL}$};

\draw[->,thick] (pl1)--(pt1hat);
\draw[->,thick] (pt1hat)--(pt1);
\draw[->,thick] (pt1)--(plinfinity);
\draw[->,thick] (maj)--(thr);
\draw[->,thick] (thr)--(majomaj);
\draw[->,thick] (majomaj)--(thromaj);
\draw[->,thick] (thromaj)--(throthr);

\draw[->,thick] (maj)--(pt1hat);
\draw[->,thick] (thr)--(pt1);
\draw[->,thick] (pt1hat)--(majomaj);
\draw[->,thick] (pt1)--(thromaj);

\draw[->] (mdl)--(ldl);
\draw[->,thick] (thr)--(ldl);
\draw[->,thick] (ldl)--(throthr);
\draw[->,thick] (maj)--(mdl);
\draw[->,thick] (mdl)--(thromaj);

\path[color=red, dashed] (mdl) edge  (majomaj);
\path[color=red, dashed] (ldl) edge  (majomaj);
\path[color=red, bend left = 6, dashed] (ldl) edge (pt1hat);
\path[color=red, bend left = 10, dashed] (plinfinity) edge (mdl);
\path[color=red, bend left = 6, dashed] (ldl) edge (pl1);


\end{tikzpicture}
\caption{Low depth threshold circuit hierarchy}
\label{fig: fig}
\end{center}
\end{figure}
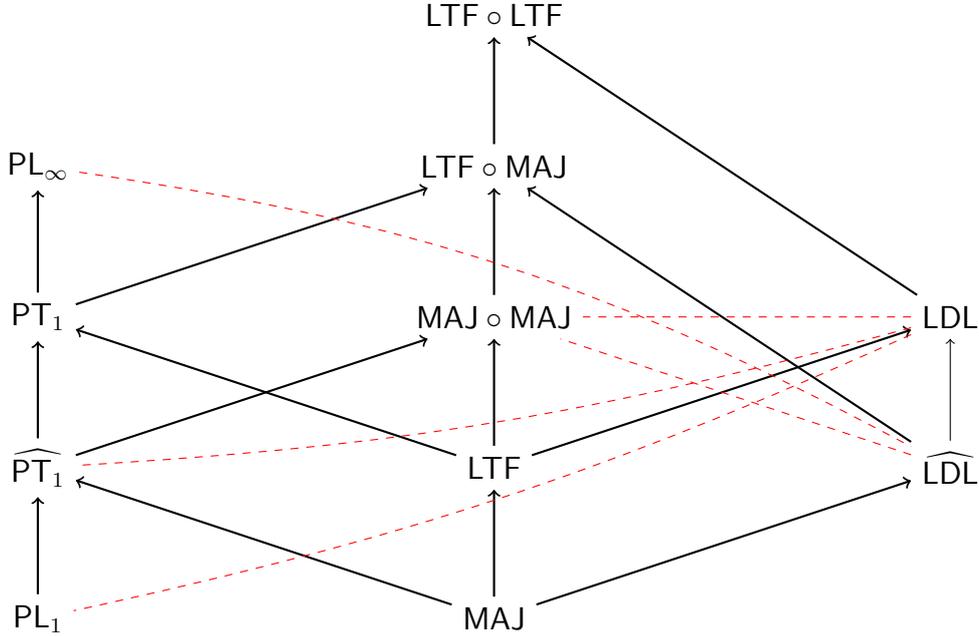

Figure \ref{fig: fig} depicts the currently known status of low depth circuit class containments, and shows where linear decision lists fit in this hierarchy.

A thick solid arrow from $\mathcal{C}_1$ to $\mathcal{C}_2$ denotes
$\mathcal{C}_1 \subsetneq \mathcal{C}_2$, a thin solid arrow from
$\mathcal{C}_1$ to $\mathcal{C}_2$ denotes $\mathcal{C}_1 \subseteq
\mathcal{C}_2$, and a dashed line between $\mathcal{C}_1$ and
$\mathcal{C}_2$ denotes incomparability. In the figure, we only show
the newly established incomparabilities.

The leftmost column has the classes defined based on spectral
representation, and the middle column has the classes based on depth-2
circuits. Concerning these classes, the picture was already completely clear:
All containments shown among classes in these columns are known to be
strict, and wherever no arrow connects two classes, they are known to
be incomparable. Essentially this part of the figure appears in
\cite{GHR92}; a subsequent refinement is the insertion of the class
$\LTF\circ\MAJ$, separated from $\MAJ\circ\MAJ$ in \cite{GHR92}, from
$\PT_1$ in \cite{Bruck90} and most recently from $\LTF\circ \LTF$ in
\cite{CM18}.

The two classes $\wh\LDL$ and $\LDL$ form the new column on the right.
In the following subsection we explain their position with respect to
the other two  columns. However here the picture is not yet
completely clear, and there are still several open questions. 

\subsection{New results}
By definition, $\MAJ \subseteq \wh\LDL$ and $\LTF \subseteq \LDL$ via
lists of size 2.  The parity function is known to not be in $\LTF$,
and it has a simple $\LDL$ with 0-1 weights in the query functions. Thus
both these containments are proper, and $\wh\LDL$ is not contained in $\LTF$.
We now observe that, implicit from prior work,
$\wh\LDL$ is not even contained in $\MAJ \circ \MAJ$.

\begin{theorem}\label{thm:omb}
\[
\wh\LDL \nsubseteq \MAJ \circ \MAJ.
\]
\end{theorem}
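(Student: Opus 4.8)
The plan is to exhibit one explicit function that has a polynomial-size, polynomially-weighted linear decision list but that prior work already rules out of polynomial-size $\MAJ\circ\MAJ$; the whole point is then just to \emph{observe} that a function already known to be hard for $\MAJ\circ\MAJ$ happens to live in $\wh\LDL$. The motivation is that a length-$k$ list $(L_1,a_1),\dots,(L_k,a_k)$ is itself a single threshold gate applied to its queries, namely $f = \sign\big(\sum_{i=1}^{k} 2^{k-i}(2a_i-1)L_i(x)\big)$, since the $2^{k-i}$ weights force the first firing query to dominate; thus $\wh\LDL\subseteq\LTF\circ\MAJ$, and the known separation $\MAJ\circ\MAJ\subsetneq\LTF\circ\MAJ$ becomes useful against $\wh\LDL$ exactly when its witness is already a list. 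Concretely I would take the Odd-Max-Bit function sitting on top of a layer of majorities: partition the $nm$ inputs into $n$ blocks $B_1,\dots,B_n$ of $m$ variables, let $g_i=\MAJ_m$ on $B_i$, and set $f=\OMB_n(g_1,\dots,g_n)$, where $\OMB_n(z)$ outputs the parity of the largest index $i$ with $z_i=1$ (and $0$ if $z=0^n$). Since $\OMB_n(z)=\sign\big(\sum_{i=1}^n(-1)^{i+1}2^i z_i\big)$ is a single $\LTF$ and each $g_i$ is a majority, $f$ is a polynomial-size $\LTF\circ\MAJ$ circuit.

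The first step is to place $f$ in $\wh\LDL$, and this is the easy, self-contained part. I would read the blocks in decreasing order of significance: for $i=1,\dots,n$ let $L_i=\sign\big(\sum_{j\in B_{n+1-i}}x_j-m/2\big)$ test whether the majority on block $B_{n+1-i}$ is $1$, with answer $a_i$ equal to the parity of $n+1-i$, and append the constant query $L_{n+1}\equiv 1$ with answer $0$. Each $L_i$ is a majority, so its weights lie in $\{0,1\}$ and are polynomially bounded, and the list has length $n+1$. Correctness is immediate: the first query that fires locates the highest block whose majority is $1$ and reports its parity, which is exactly $f$. Hence $f\in\wh\LDL$.

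The second step, that $f\notin\MAJ\circ\MAJ$, is where all the difficulty lies, and in a plan I would simply invoke the prior lower bound: Goldmann, H\aa stad and Razborov \cite{GHR92} separate $\LTF\circ\MAJ$ from $\MAJ\circ\MAJ$, and $f$ realizes an instance of exactly this Odd-Max-Bit-over-majorities form. I want to stress why this cannot be shortcut by the elementary discriminator lemma: $f$ has constant correlation (about $1/2$) with the single majority gate on its most significant block, so a $1/\mathrm{poly}$-correlated majority certainly exists and the discriminator lemma is powerless to exclude $\MAJ\circ\MAJ$. Ruling out polynomial-size $\MAJ\circ\MAJ$ therefore genuinely requires the sharper weight/size argument of \cite{GHR92}, which is the main obstacle and the reason the statement is phrased as being implicit in prior work. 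Combining the two steps yields a function in $\wh\LDL\setminus(\MAJ\circ\MAJ)$, and hence $\wh\LDL\nsubseteq\MAJ\circ\MAJ$; the only new content over prior work is the routine verification in the first step that this hard function already admits a polynomially-weighted linear decision list.
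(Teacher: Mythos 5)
Your overall skeleton is the same as the paper's: exhibit an ODD-MAX-BIT-based function, observe that it has a trivial small-weight decision list (the paper's proof of Theorem~\ref{thm:omb} does exactly this, querying in decreasing order of significance), and import the $\MAJ\circ\MAJ$ lower bound from prior work. Your first step is correct: $\OMB_n\circ\MAJ_m$ is computed by a length-$(n+1)$ list of majority queries, so it lies in $\wh\LDL$, and your observation that $\wh\LDL\subseteq\LTF\circ\MAJ$ via the weights $2^{k-i}(2a_i-1)$ is also correct (it is noted in the paper as well). The gap is in your second step, which you yourself identify as the one carrying all the difficulty, and which you discharge entirely onto the claim that the separation of $\LTF\circ\MAJ$ from $\MAJ\circ\MAJ$ in \cite{GHR92} is witnessed by ``exactly this Odd-Max-Bit-over-majorities form''. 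That attribution is asserted, never verified, and it is precisely the point on which the proof stands or falls: if the witness function in \cite{GHR92} is a different function, or even a different composition form, then you have not shown $f\notin\MAJ\circ\MAJ$ at all. The paper's own proof is telling here: to place its $\OMB$-type witness outside $\MAJ\circ\MAJ$ it does not invoke \cite{GHR92}, but instead combines the 2007 discrepancy bound of Buhrman, Vereshchagin and de~Wolf \cite{BVW07} for $\OMB\circ\AND$ with the result of Hajnal et al.~\cite{HMPST93} that exponentially small discrepancy precludes polynomial-size $\MAJ\circ\MAJ$ circuits. If hardness of an $\OMB$ composition were quotable off the shelf from \cite{GHR92}, there would be little reason for the authors, who cite \cite{GHR92} several times elsewhere, to route through a result fifteen years younger.

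Two further points. First, your methodological aside is backwards: the discriminator lemma is not ``powerless'' here; it is exactly the tool used. The lemma quantifies over all distributions, so a lower bound only requires exhibiting one hard distribution under which no small-weight threshold gate discriminates. Your observation that the majority of the most significant block correlates with $f$ under the uniform distribution merely shows that the uniform distribution is not that hard distribution. The paper's proof is an instance of the discriminator route, with \cite{BVW07} supplying the hard distribution through the discrepancy bound. Second, your construction is easy to repair: for any $m\geq 2$ one can fix all but two inputs in each block so that the block majority computes the $\AND$ of the two surviving variables, hence $\OMB_n\circ\AND$ is a subfunction of your $\OMB_n\circ\MAJ_m$; since $\MAJ\circ\MAJ$ is closed under restrictions, the lower bound for $\OMB\circ\AND$ obtained from \cite{BVW07} together with \cite{HMPST93} transfers to your $f$. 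With that substitution, your argument essentially becomes the paper's proof.
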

\begin{proof}  
Define the ODD-MAX-BIT function by $\OMB(x) = 1$ iff the largest index $i$ where $x_i = 1$ is odd ($\OMB(0^n) = 0$).
Buhrman, Vereshchagin and de~Wolf~\cite{BVW07} showed that $\OMB \circ \AND$ is hard, in the sense that it has exponentially small \emph{discrepancy}.  By a result of Hajnal, Maass, Pudl\'ak, Szegedy and Tur\'an~\cite{HMPST93},
this implies that $\OMB \circ \AND$ cannot be computed by polynomial sized $\MAJ \circ \MAJ$ circuits.

Note that $\OMB$ can be computed by a linear sized decision list by querying the variables in decreasing order of their indices. Thus $\OMB \circ \AND$ can be computed by a linear sized decision list of $\AND$'s, and hence by a linear decision list with 0-1 weights.
\end{proof}

On the other hand, it is easily seen that $\MAJ_n \circ \XOR$ is in 
$\MAJ \circ \MAJ$, and even in $\wh\PT_1$ (see for instance~\cite{Bruck90}).
Combining this with Theorem~\ref{thm:main}, we obtain:
\begin{theorem}\label{thm:whpt1-not-in-ldl}
\[
\wh\PT_1 \nsubseteq \LDL .
\]
\end{theorem}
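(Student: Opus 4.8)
The plan is to exhibit a single function that lies in $\wh\PT_1$ but not in $\LDL$, and the natural candidate is $\MAJ_n \circ \XOR$ itself, since Theorem~\ref{thm:main} already rules out its computation by polynomial-sized linear decision lists. Thus the statement reduces to establishing the membership $\MAJ_n \circ \XOR \in \wh\PT_1$; everything on the $\LDL$ side is already done. First I would recall that $\wh\PT_1$ is exactly the class of functions computed by polynomial-sized $\MAJ \circ \PAR$ circuits, so it suffices to write $\MAJ_n \circ \XOR$ as a majority vote over parities, each parity taken over the $2n$ input variables (with polynomially many parity gates and polynomially bounded multiplicities on the top majority gate).

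The key structural observation is that each inner $\XOR$ gate, $x_i \oplus y_i$, is already a parity of two input variables, hence itself a legitimate bottom-layer $\PAR$ gate. So the only thing to verify is that the outer $\MAJ_n$, applied to the $n$ bits $z_i := x_i \oplus y_i$, can be realized by a majority gate reading these parity outputs. Since $\MAJ_n(z) = \sign\!\left(\sum_{i=1}^n z_i - n/2\right)$ by definition, and each $z_i$ is the output of a parity gate $\PAR(x_i, y_i)$, the composition $\MAJ_n \circ \XOR$ is literally a single threshold gate with $\{0,1\}$ (hence polynomially bounded integer) weights sitting on top of $n$ parity gates. This is a $\MAJ \circ \PAR$ circuit of linear size, placing the function in $\wh\PT_1$. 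I would cite~\cite{Bruck90} for this being a standard membership, as the paper already does immediately before the theorem.

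The step requiring the most care is the bookkeeping that ties together the two halves: I must make sure the hardness result I invoke and the easiness result I invoke are about the \emph{same} function and the \emph{same} input length. Theorem~\ref{thm:main} gives a $2^{\Omega(n)}$ lower bound on any $\LDL$ for $\MAJ_n \circ \XOR$ over its $2n$ variables, which is super-polynomial and therefore certifies $\MAJ_n \circ \XOR \notin \LDL$; the membership argument above certifies $\MAJ_n \circ \XOR \in \wh\PT_1$. Combining these two facts for the one explicit family $\{\MAJ_n \circ \XOR\}_n$ immediately yields a function in $\wh\PT_1 \setminus \LDL$, which is precisely the non-containment $\wh\PT_1 \nsubseteq \LDL$. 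I do not anticipate a genuine obstacle here, since the $\MAJ \circ \PAR$ realization is immediate and the lower bound is quoted as a black box; the only pitfall is purely presentational, namely being explicit that a super-polynomial lower bound suffices to exclude membership in the \emph{polynomial}-size class $\LDL$.
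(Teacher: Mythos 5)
Your proposal is correct and takes essentially the same route as the paper: the paper also uses $\MAJ_n \circ \XOR$ as the separating function, notes its membership in $\wh\PT_1$ (citing \cite{Bruck90}), and combines this with the exponential $\LDL$ lower bound of Theorem~\ref{thm:main}. Your explicit linear-size $\MAJ \circ \PAR$ circuit (parity gates $x_i \oplus y_i$ feeding a single majority gate) is simply a spelled-out version of the membership the paper treats as immediate.
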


The following strengthening of Theorem~\ref{thm:whpt1-not-in-ldl} is implicit from a recent result of Chattopadhyay, Mande and Sherif~\cite{CMS18}.
\begin{theorem}
  \label{thm:pl1-not-in-ldl}
  \[\PL_1 \not\subseteq \LDL.\]
\end{theorem}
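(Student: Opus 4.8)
The plan is to exhibit an explicit function that lies in $\PL_1$ but requires exponential-size linear decision lists, and the natural candidate suggested by the remark preceding the statement is $\SINK \circ \XOR$. First I would recall from the cited work of Chattopadhyay, Mande and Sherif~\cite{CMS18} the two facts I need about $\SINK$: that $\SINK \circ \XOR$ belongs to $\PL_1$ (i.e.\ has spectral $\ell_1$-norm bounded by a polynomial), and that the communication matrix of $\SINK \circ \XOR$ has no large-weight monochromatic rectangle under an appropriate product distribution. Given those two facts, the theorem follows by exactly the same pipeline used to prove Theorem~\ref{thm:main}: membership in $\PL_1$ places the function on the left side of the desired non-containment, while the absence of large monochromatic rectangles feeds into Lemma~\ref{lem:largerects} to force any computing $\LDL$ to have size at least $1/\sqrt{w}$, which is exponential.

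The key steps, in order, are as follows. First, I would state that $\SINK \circ \XOR \in \PL_1$, citing~\cite{CMS18}; this is the part that makes the separating function live in $\PL_1$ rather than merely outside $\LDL$. Second, I would invoke the rectangle bound from~\cite{CMS18}: there is a product distribution $\mu \times \nu$ under which every monochromatic rectangle of the communication matrix of $\SINK \circ \XOR$ has weight at most $w = 2^{-\Omega(n)}$. Third, I would apply Lemma~\ref{lem:largerects} with this $w$ to conclude that any $\LDL$ computing $\SINK \circ \XOR$ has size at least $1/\sqrt{w} = 2^{\Omega(n)}$, so the function is not in $\LDL$. Combining the first step with this conclusion yields a function witnessing $\PL_1 \not\subseteq \LDL$, which is the claim. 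Since $\PL_1 \subseteq \wh\PT_1$ in the hierarchy, this indeed strengthens Theorem~\ref{thm:whpt1-not-in-ldl}.

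I expect the main obstacle to be purely one of sourcing rather than of argument: the entire force of the theorem rests on two quantitative properties of $\SINK \circ \XOR$ proved elsewhere, so the real work is to verify that~\cite{CMS18} genuinely supplies (a) a polynomial bound on the spectral $\ell_1$-norm and (b) a rectangle-weight upper bound strong enough to make $1/\sqrt{w}$ superpolynomial under a \emph{product} distribution. The condition in Lemma~\ref{lem:largerects} is stated for product distributions, so if the rectangle bound in~\cite{CMS18} is phrased for a non-product distribution or only for square rectangles, I would need to either adapt it to a product distribution or re-derive a product-distribution version—likely via an isoperimetric or Harper-style argument analogous to Theorem~\ref{thm:nolargesquares}. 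Assuming the cited bounds are already in product form, the proof is immediate and essentially a one-line composition of the two imported facts with Lemma~\ref{lem:largerects}.
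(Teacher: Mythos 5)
Your proposal follows the paper's proof essentially verbatim: the paper also uses $\SINK \circ \XOR$, cites \cite{CMS18} for membership in $\PL_1$, and applies Lemma~\ref{lem:largerects} with a rectangle-weight bound under the uniform (product) distribution, exactly as you outline. The only difference is one of sourcing, which you yourself anticipate: rather than citing the rectangle bound from \cite{CMS18} (the paper notes it is implicit there, via Claim~6.4 with $\epsilon = 0$), the paper gives a self-contained entropy argument using Shearer's lemma showing every monochromatic rectangle has size at most $2^{2\binom{n}{2} - n + \log n + 1}$, which yields an $\LDL$ lower bound of $2^{n/2}$ --- subexponential in the input length $2\binom{n}{2}$ but still superpolynomial, which is all the theorem requires.
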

(We defer a discussion of why Theorem~\ref{thm:pl1-not-in-ldl} holds to Section~\ref{sec:pl1-not-in-ldl}.)
Putting together these separations with the known containments $\PL_1 \subseteq \wh\PT_1 \subseteq \MAJ\circ\MAJ$, we obtain a slew of incomparability results. 
\begin{corollary}\label{cor:incomp1}
For any class 
$A \in \{\wh\LDL, \LDL\}$ and $B \in \{\PL_1, \MAJ\circ \MAJ\}$,
the classes $A$ and $B$ are incomparable.
\end{corollary}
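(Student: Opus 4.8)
The plan is to reduce the entire statement to the three separations already in hand, namely Theorem~\ref{thm:omb} ($\wh\LDL \nsubseteq \MAJ\circ\MAJ$) and Theorem~\ref{thm:pl1-not-in-ldl} ($\PL_1 \nsubseteq \LDL$), together with two purely formal containments: $\wh\LDL \subseteq \LDL$ (a bounded-weight list is in particular a list, the thin arrow in Figure~\ref{fig: fig}) and the chain $\PL_1 \subseteq \wh\PT_1 \subseteq \MAJ\circ\MAJ$ quoted above. Recall that $A$ and $B$ are incomparable exactly when $A \nsubseteq B$ and $B \nsubseteq A$, so I must verify both non-containments for each of the four pairs $(A,B)$ with $A \in \{\wh\LDL,\LDL\}$ and $B \in \{\PL_1,\MAJ\circ\MAJ\}$. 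The only tool beyond these four facts is the elementary monotonicity of non-containment along inclusions: if $\mathcal{C} \subseteq \mathcal{D}$, then $\mathcal{X} \nsubseteq \mathcal{D}$ forces $\mathcal{X} \nsubseteq \mathcal{C}$ (rule~1), and $\mathcal{C} \nsubseteq \mathcal{X}$ forces $\mathcal{D} \nsubseteq \mathcal{X}$ (rule~2); both follow by taking contrapositives. I will apply these repeatedly, taking care to push each non-containment in the correct direction.

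For the direction $A \nsubseteq B$, I would start from Theorem~\ref{thm:omb}, giving $\wh\LDL \nsubseteq \MAJ\circ\MAJ$. Since $\wh\LDL \subseteq \LDL$, rule~2 upgrades this to $\LDL \nsubseteq \MAJ\circ\MAJ$, so $A \nsubseteq \MAJ\circ\MAJ$ for both choices of $A$. Then, because $\PL_1 \subseteq \MAJ\circ\MAJ$, rule~1 transfers each of these downward to $A \nsubseteq \PL_1$. This settles $A \nsubseteq B$ in all four cases.

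For the direction $B \nsubseteq A$, I would start from Theorem~\ref{thm:pl1-not-in-ldl}, giving $\PL_1 \nsubseteq \LDL$. Using $\wh\LDL \subseteq \LDL$ and rule~1 yields $\PL_1 \nsubseteq \wh\LDL$ as well, so $\PL_1 \nsubseteq A$ for both $A$. Finally, since $\PL_1 \subseteq \MAJ\circ\MAJ$, rule~2 lifts each of these to $\MAJ\circ\MAJ \nsubseteq A$. Combining with the previous paragraph gives both non-containments for every pair, establishing all four incomparabilities.

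I do not expect any genuine obstacle in this deduction; it is a bookkeeping argument over the containments of Figure~\ref{fig: fig}, with all the real content sitting in the inputs — Theorem~\ref{thm:omb} (via the discrepancy bound for $\OMB\circ\AND$) and Theorem~\ref{thm:pl1-not-in-ldl}, whose justification is deferred to Section~\ref{sec:pl1-not-in-ldl}. The one point requiring care is the direction of the monotonicity transfers: a non-containment $\mathcal{X} \nsubseteq \mathcal{D}$ may only be pushed \emph{down} to subclasses of $\mathcal{D}$, and a non-containment $\mathcal{C} \nsubseteq \mathcal{X}$ only \emph{up} to superclasses of $\mathcal{C}$; the reverse transfers fail in general, so the roles of $\PL_1$ and $\MAJ\circ\MAJ$ as the small and large ends of the chain, and of $\wh\LDL$ and $\LDL$ on the list side, must be kept straight throughout.
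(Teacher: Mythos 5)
Your proof is correct and is essentially the paper's own argument: the paper derives Corollary~\ref{cor:incomp1} exactly by combining Theorem~\ref{thm:omb} and Theorem~\ref{thm:pl1-not-in-ldl} with the containments $\wh\LDL \subseteq \LDL$ and $\PL_1 \subseteq \wh\PT_1 \subseteq \MAJ\circ\MAJ$, transferring non-containments along inclusions just as you do. Your write-up merely makes explicit the monotonicity bookkeeping that the paper leaves implicit.
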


In particular, the classes $\LDL$ and $\MAJ\circ\MAJ$ are incomparable.
This completely answers the
open question posed by Tur{\'a}n and Vatan \cite{TV97}.

Impagliazzo and Williams~\cite[Theorem~4.8]{IW10} showed that the function $\OR_n \circ \EQ_n$
(also called  Block-Equality) does not contain large monochromatic rectangles (in fact they showed that it does not contain large monochromatic rectangles under any product distribution). Thus, by Lemma~\ref{lem:largerects},
any linear decision list computing $\OR_n \circ \EQ_n$ must be of size at least $2^{\Omega(n)}$. 
We now observe that $\OR \circ \EQ \in \MAJ \circ \MAJ$.
Consequently, $\OR \circ \EQ$ also witnesses $\MAJ \circ \MAJ \nsubseteq \LDL$.
However, in contrast to Theorem~\ref{thm:main}, note that the lower bound is subexponential since $\OR \circ \EQ$ is defined on $2n^2$ variables. Moreover,
$\OR\circ \EQ$ seems to incur a significant polynomial blow up in size when simulated by $\MAJ \circ \MAJ$ circuits, whereas $\MAJ_n \circ \XOR$ has linear sized $\MAJ \circ \MAJ$ circuits.  
\begin{theorem}
\[
\OR \circ \EQ \in \MAJ \circ \MAJ.
\]
\end{theorem}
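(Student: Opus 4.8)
The plan is to exhibit an explicit polynomial-sized $\MAJ \circ \MAJ$ circuit for $\OR_n \circ \EQ_n$. Recall that $\OR_n \circ \EQ_n$ takes $n$ blocks of inputs, where each block feeds an equality gadget $\EQ_n$ on two $n$-bit strings, and the top $\OR_n$ outputs $1$ iff at least one block is an equality. First I would handle a single block: I claim each $\EQ_n$ on inputs $u, v \in \bool^n$ is computable by a small $\MAJ \circ \MAJ$ subcircuit, or at least by a short linear combination of $\MAJ$ gates. The standard trick is that $u = v$ iff the Hamming distance $d(u,v) = 0$, and $d(u,v) = 0$ can be detected by sandwiching: $\EQ(u,v) = \MAJ\big(d(u,v) \le 0\big)$, realized as the AND of ``$\sum_i (u_i \oplus v_i) \le 0$'' — but since $\XOR$ is not directly available at the bottom, I would instead write equality arithmetically, noting $u_i \oplus v_i = u_i + v_i - 2u_iv_i$, and observe that for $0/1$ inputs each $\EQ$ can be captured as an exact-threshold condition on a single affine form, hence as a difference of two $\MAJ$ gates (an ``exact count equals a target'' test).

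The key structural step is then to combine the blocks through the top $\OR_n$. The natural idea is: $\OR_n \circ \EQ_n$ outputs $1$ iff $\sum_{k=1}^n \EQ_n(\text{block } k) \ge 1$, which is itself a threshold (indeed a $\MAJ$-type gate) applied to the $n$ block-indicators. Since each block indicator is already a $\MAJ \circ \MAJ$ object and I want overall depth two, I would push the inner structure into the bottom layer: the bottom layer consists of all the $\MAJ$ gates computing (the two halves of) each block's equality test, and the single top $\MAJ$ gate takes an appropriate weighted sum of these bottom outputs so that it fires exactly when some block is in equality. The weighting must be chosen so that the contributions of distinct blocks do not interfere — a standard geometrically-growing or sufficiently-separated weight scheme, with integer weights polynomially bounded in $n$ so that the total circuit remains polynomial-sized in the $2n^2$ input variables.

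The main obstacle I anticipate is the interaction between exactness of equality and monotone thresholding at the top. A bare $\MAJ$ gate counts a majority, but I want to detect ``at least one block equal,'' and each block equality is itself an exact (two-sided) threshold rather than a one-sided $\MAJ$. The care needed is to encode each $\EQ$ indicator using $\MAJ$ gates whose weighted sum, when fed to the top $\MAJ$, produces a clean $0/1$ gap across the decision boundary of the $\OR$; ensuring the weights stay polynomial while guaranteeing no spurious firing from partial contributions of non-equal blocks is the delicate calculation. I would verify correctness by checking the two cases: if every block is unequal, each bottom contribution stays strictly below its equality value, and the chosen weights keep the top sum below threshold; if at least one block is equal, that block alone pushes the top sum across threshold regardless of the others. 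Once this case analysis goes through with polynomially bounded integer weights, the circuit witnesses $\OR \circ \EQ \in \MAJ \circ \MAJ$, completing the proof.
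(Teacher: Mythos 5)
Your overall architecture matches the paper's: write each block's equality as an exact-threshold condition on a single affine form, realize that condition as a difference of two threshold gates (the sandwich trick), and feed the signed sum of all these gates into one top gate that tests whether at least one block fired. However, there is a genuine gap at the crucial quantitative step. You claim the bottom gates can be taken to be $\MAJ$ gates, i.e.\ threshold gates with polynomially bounded integer weights. This is false. Any affine form $\sum_i a_iu_i + \sum_i b_iv_i + c$ that vanishes exactly when $u = v$ must (by plugging in $u=v$) satisfy $c=0$ and $b_i = -a_i$, so the condition becomes $\sum_i a_i(u_i - v_i) = 0$; for this to be nonzero on every pair $u \neq v$, all subset sums of the $a_i$ must be pairwise distinct (take $u,v$ to be indicators of two distinct subsets), which forces $\max_i |a_i| \geq (2^n-1)/n$. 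So the equality-testing gates necessarily have exponential weights: your construction, like the paper's, lands in $\MAJ \circ \LTF$, not $\MAJ \circ \MAJ$. The paper closes exactly this gap by invoking the nontrivial theorem of Goldmann, H{\aa}stad and Razborov that $\MAJ \circ \LTF = \MAJ \circ \MAJ$; without that theorem (or some substitute such as a Chinese-remainder-style small-weight simulation of the equality tests, which is itself a real construction, not a routine weight adjustment), the proof is incomplete.

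A secondary, non-fatal confusion: your worry about ``interference'' between blocks and the proposed geometrically-growing top weights is unnecessary, and geometrically growing weights at the top would themselves destroy the polynomial-weight property of the top gate. The sandwich trick already guarantees that each block's pair of gates contributes exactly $g_1^{(k)} - g_2^{(k)} \in \{0,1\}$ to the top sum, so the top gate simply uses weights $+1$ and $-1$ and tests whether the total is at least $1$; no separation scheme is needed.
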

\begin{proof}
First observe that $\OR \circ \EQ$ can be computed by a $\MAJ \circ \EQ$ circuit by suitably padding constants to the input.
Next, note that $\EQ$ is an \emph{exact threshold function}, that is there exist reals $a_1, \ldots, a_n, b_1, \ldots, b_n, c$ such that $\EQ(x,y) = 1$ iff $\sum_{i = 1}^n a_ix_i + b_iy_i = c$.
Hansen and Podolskii~\cite{HP10} showed that such functions can be efficiently simulated by $\MAJ \circ \LTF$ circuits.
However, we do not need the full strength of their result, so we
give a direct construction below. 

For an equality on $2n$ bits, say $x_1, \ldots, x_n, y_1, \ldots, y_n$, note that
\[
\EQ_n(x_1, \ldots, x_n, y_1, \ldots, y_n) = 1 \iff \sum_{i=1}^n 2^i(x_i - y_i) = 0.
\]
Consider the following linear threshold functions.
\begin{align*}
g_1(x, y) & = \sign\left(\sum_{i=1}^n 2^i(x_i - y_i) + 1/2)\right) \text{~and}\\
g_2(x, y) & = \sign\left(\sum_{i=1}^n 2^i(x_i - y_i) - 1/2)\right).
\end{align*}
Observe that $g_1(x, y) - g_2(x, y) = \EQ_n(x, y)$.

Let $g_1^{(i)}$ and $g_2^{(i)}$ denote these $\LTF$s for the $i$th block
on which we test equality.  The function $\OR_n\circ\EQ_n$ is just
\[\OR_n\circ\EQ_n = 
\sign\left((g_1^{(1)}-g_2^{(1)}) + (g_1^{(2)}-g_2^{(2)}) + \ldots + (g_1^{(n)}-g_2^{(n)})\right);\] this formulation puts it in $\MAJ \circ \LTF$.  

Finally, Goldmann, H{\aa}stad and Razborov~\cite{GHR92} showed that $\MAJ \circ \LTF = \MAJ \circ \MAJ$. 
Thus, $\OR \circ \EQ \in \MAJ \circ \MAJ$.
\end{proof}

\begin{theorem}\label{thm:whldl-not-in-plinfty}
\[
\wh\LDL \nsubseteq \PL_\infty.
\]
\end{theorem}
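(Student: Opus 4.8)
The plan is to exhibit a single function that lies in $\wh\LDL$ yet has all of its Fourier coefficients superpolynomially small, so that it falls outside $\PL_\infty$. The natural witness is $\OMB \circ \AND$, the same function used in Theorem~\ref{thm:omb}. That proof already shows $\OMB \circ \AND \in \wh\LDL$: querying the $\AND$-blocks in decreasing order of index gives a linear-size decision list, and each $\AND$ of literals is an $\LTF$ with bounded integer weights. It remains to argue $\OMB \circ \AND \notin \PL_\infty$, for which I will use the fact (from \cite{BVW07}, also invoked in Theorem~\ref{thm:omb}) that $\OMB \circ \AND$ has discrepancy $2^{-\Omega(n)}$ under the uniform distribution.

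The bridge I would prove is that small uniform discrepancy forces every Fourier coefficient to be small. Viewing $F = \OMB \circ \AND$ as a $\pm 1$-valued two-party function with the bipartition used by \cite{BVW07} (Alice holds $x$, Bob holds $y$), every subset of coordinates splits as $S = A \cup B$ with $A$ on the $x$-coordinates and $B$ on the $y$-coordinates, so $\chi_S(x,y) = \chi_A(x)\chi_B(y)$. Partitioning into $X^{\pm} = \{x : \chi_A(x) = \pm 1\}$ and $Y^{\pm} = \{y : \chi_B(y) = \pm 1\}$, I can write
\[
\wh F(S) = \frac{1}{2^{2n}}\sum_{x,y} F(x,y)\chi_A(x)\chi_B(y) = \frac{1}{2^{2n}}\Bigl(\textstyle\sum_{X^+\times Y^+}F + \sum_{X^-\times Y^-}F - \sum_{X^+\times Y^-}F - \sum_{X^-\times Y^+}F\Bigr),
\]
a signed sum of four rectangle sums. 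Since each rectangle sum, normalised by $2^{2n}$, is bounded in absolute value by the uniform discrepancy $\mathrm{disc}(F)$, I obtain $|\wh F(S)| \le 4\,\mathrm{disc}(F)$ for every $S$, and hence $\max_S |\wh F(S)| \le 4\,\mathrm{disc}(F)$.

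Combining the two ingredients, $\max_S|\wh F(S)| \le 4\cdot 2^{-\Omega(n)} = 2^{-\Omega(n)}$, which is smaller than $1/\mathrm{poly}(n)$ for all large $n$; thus $F = \OMB\circ\AND \notin \PL_\infty$. Together with $F \in \wh\LDL$ this yields $\wh\LDL \nsubseteq \PL_\infty$.

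The step needing the most care is matching the distributions: the classes $\PL_1$ and $\PL_\infty$ are defined via the uniform-measure Fourier transform, so the discrepancy bound must also be under the uniform distribution and for the $\pm 1$ output convention (so that $\wh F(\emptyset) = \E[F]$ is itself controlled rather than trivially constant). I would therefore confirm that the cited bound from \cite{BVW07} is indeed the uniform-distribution discrepancy; if the reference states it for a different distribution, the spectral-norm route $\max_S|\wh F(S)| \le \|M_F\|_2/2^n$ provides an alternative, requiring instead a singular-value bound on the signed communication matrix of $\OMB\circ\AND$. The remaining verifications — that the $\AND$ queries have polynomially bounded integer weights and that the list has polynomial size — are routine and already implicit in Theorem~\ref{thm:omb}.
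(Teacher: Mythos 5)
Your bridge inequality is internally sound: splitting each character $\chi_S = \chi_A\chi_B$ into the four rectangles $X^{\pm}\times Y^{\pm}$ does give $\max_S |\wh F(S)| \le 4\,\mathrm{disc}_U(F)$, where $\mathrm{disc}_U$ denotes discrepancy \emph{under the uniform distribution}. The gap is in the premise you feed into it, and it is exactly the issue you flagged but deferred. The Buhrman--Vereshchagin--de Wolf bound for $\OMB\circ\AND$ is not a uniform-distribution bound: discrepancy in that context is minimized over all distributions, and their exponentially small bound is witnessed by a heavily non-uniform distribution (which is all that the Hajnal et al.~\cite{HMPST93} connection invoked in Theorem~\ref{thm:omb} needs). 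Under the uniform distribution, $F=\OMB_n\circ\AND$ has discrepancy at least $1/4$: on the rectangle $\{x: x_n = 1\}\times\{y: y_n = 1\}$, which has uniform measure $1/4$, the $n$-th block fires and $F$ is constant there (its value depends only on the parity of $n$). So the quantity you need to be exponentially small is in fact a constant.

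Moreover, this is not repairable by a cleverer argument, because the witness itself is wrong: $\OMB\circ\AND$ \emph{lies in} $\PL_\infty$. Conditioning on the highest index $i$ with $x_i \wedge y_i = 1$ gives $\Pr_{x,y}[F=1] = \sum_{i\le n,\, i \text{ odd}} \tfrac{1}{4}\bigl(\tfrac{3}{4}\bigr)^{n-i}$, which lies in $[3/16,\,5/8]$ for every $n$ and converges to $4/7$ ($n$ odd) resp.\ $3/7$ ($n$ even). Hence the empty-set Fourier coefficient is a constant: it equals $\Pr[F=1]$ in the paper's $\{0,1\}$ convention, and in the $\pm1$ convention $\E[(-1)^F]$ tends to $\mp 1/7$ and is never smaller in magnitude than $233/8192$ (the minimum, at $n=7$). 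One inverse-polynomially large coefficient is all that membership in $\PL_\infty$ requires. This also blocks your fallback route, since $\|M_F\|_2 \ge |\mathbf{1}^{\top} M_F \mathbf{1}|/2^n = 2^n\,|\E[(-1)^F]| = \Omega(2^n)$. The underlying moral is that hardness for $\MAJ\circ\MAJ$ (small discrepancy under \emph{some} adversarial distribution) and spectral flatness (a uniform-distribution property) genuinely diverge, and $\OMB\circ\AND$ sits precisely in the gap between them. The paper's proof uses a different witness altogether: every symmetric function is in $\wh\LDL$, because small-weight threshold queries can determine the Hamming weight, and Bruck~\cite{Bruck90} proved that the Complete Quadratic function, which is symmetric, is not in $\PL_\infty$.
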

\begin{proof}
It is easy to see that any symmetric function (a function that only depends on the Hamming weight of the input) can be computed by linear sized linear decision lists where query functions are majority:
the linear threshold queries can be used to determine the Hamming weight of the input, and the decision list outputs the appropriate answer at each decision.

Bruck~\cite{Bruck90} showed that the \emph{Complete Quadratic} function, which is a symmetric function, is not in $\PL_\infty$.
This function yields the required separation.
\end{proof}

Combining Corollary~\ref{cor:incomp1} and Theorem~\ref{thm:whldl-not-in-plinfty} yields more incomparability results.
\begin{corollary}\label{cor:incomp2}
For any class 
$A \in \{\wh\LDL, \LDL\}$ and $B \in \{\PL_1,\PL_\infty\}$, 
the classes $A$ and $B$ are incomparable. In other words,
all spectral classes in the first column (see Figure~\ref{fig: fig}) are incomparable to all classes in the third column. 
\end{corollary}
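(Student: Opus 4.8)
The plan is to obtain all four required incomparabilities by assembling results already proved, using only the definitional containment $\wh\LDL \subseteq \LDL$ (which holds because restricting the query weights to polynomially-bounded integers can only shrink the class) to transport each separation from one $\LDL$-variant to the other. Recall that $A$ and $B$ are incomparable precisely when $A \not\subseteq B$ and $B \not\subseteq A$. Since Corollary~\ref{cor:incomp1} already establishes that each $A \in \{\wh\LDL, \LDL\}$ is incomparable with $\PL_1$, the only genuinely new content lies in the two pairs involving $\PL_\infty$, and each such pair splits into a ``$A \not\subseteq \PL_\infty$'' direction and a ``$\PL_\infty \not\subseteq A$'' direction.

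For the directions $A \not\subseteq \PL_\infty$, I would begin from Theorem~\ref{thm:whldl-not-in-plinfty}, which gives $\wh\LDL \not\subseteq \PL_\infty$ (witnessed by the Complete Quadratic function, a symmetric function computable by short majority decision lists but lying outside $\PL_\infty$). The case $\LDL \not\subseteq \PL_\infty$ is then immediate: a witness $g \in \wh\LDL \setminus \PL_\infty$ also lies in $\LDL \setminus \PL_\infty$, since $\wh\LDL \subseteq \LDL$ places $g$ in $\LDL$ while leaving it outside $\PL_\infty$.

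For the reverse directions $\PL_\infty \not\subseteq A$, I would invoke Theorem~\ref{thm:whpt1-not-in-ldl} together with the containment $\wh\PT_1 \subseteq \PL_\infty$ (a thick arrow in the first column of Figure~\ref{fig: fig}). Concretely, $\MAJ_n \circ \XOR$ lies in $\wh\PT_1 \subseteq \PL_\infty$ yet, by Theorem~\ref{thm:main}, outside $\LDL$; hence $\PL_\infty \not\subseteq \LDL$. Because $\wh\LDL \subseteq \LDL$, any function outside $\LDL$ is a fortiori outside $\wh\LDL$, so the same witness yields $\PL_\infty \not\subseteq \wh\LDL$. Combining these four separations with the $\PL_1$ incomparabilities of Corollary~\ref{cor:incomp1} gives the corollary. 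I expect no real obstacle: the argument is pure bookkeeping, and the only point requiring care is the direction in which $\wh\LDL \subseteq \LDL$ is applied --- ``forward'' for the $A \not\subseteq \PL_\infty$ separations (a witness for the smaller class also witnesses for the larger), and ``contrapositively'' for the $\PL_\infty \not\subseteq A$ separations (non-membership in the larger class forces non-membership in the smaller).
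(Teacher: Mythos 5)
Your proposal is correct, and it follows the paper's approach in all essentials: the paper proves this corollary by pure bookkeeping, stating only that it follows by combining Corollary~\ref{cor:incomp1} with Theorem~\ref{thm:whldl-not-in-plinfty}. The one point where you diverge is the direction $\PL_\infty \not\subseteq A$. The paper's intended route transports the separation $\PL_1 \not\subseteq \LDL$ already contained in Corollary~\ref{cor:incomp1} (ultimately Theorem~\ref{thm:pl1-not-in-ldl}, witnessed by $\SINK \circ \XOR$) up the known containment $\PL_1 \subseteq \PL_\infty$; you instead transport Theorem~\ref{thm:whpt1-not-in-ldl} (witnessed by $\MAJ_n \circ \XOR$, via Theorem~\ref{thm:main}) up the known containment $\wh\PT_1 \subseteq \PT_1 \subseteq \PL_\infty$. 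Both containment chains are among the classical facts recorded in the left column of Figure~\ref{fig: fig}, so both routes are sound; yours has the mild advantage of resting only on the paper's flagship result and yielding an exponentially hard witness ($\MAJ_n \circ \XOR$) rather than the subexponentially hard $\SINK \circ \XOR$, while the paper's is more economical in that it literally combines just the two results it cites. Your handling of the remaining directions --- $\wh\LDL \not\subseteq \PL_\infty$ from Theorem~\ref{thm:whldl-not-in-plinfty}, promoted to $\LDL$ via $\wh\LDL \subseteq \LDL$, and the $\PL_1$ pairs quoted directly from Corollary~\ref{cor:incomp1} --- matches the paper exactly.
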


Finally, as noted in \cite{TV97}, $\LDL$ is contained in $\LTF \circ \LTF$. 
The same argument shows that $\wh\LDL$ is contained in $\LTF \circ \MAJ$. 
Corollary~\ref{cor:incomp1} implies that these containments are strict.

\subsection{Proving Theorem~\ref{thm:pl1-not-in-ldl}}
\label{sec:pl1-not-in-ldl}
As mentioned earlier, it is implicit from a recent result of Chattopadhyay et al.~\cite{CMS18} that $\PL_1 \nsubseteq \LDL$.
We first define the function used to achieve the separation and introduce some background required.

\begin{definition}[$\SINK$]
Consider a complete undirected graph on $n$ vertices with variables $x_{i, j}$ for $i < j \in [n]$.  The variable $x_{i,j}$ assigns a direction to the edge between $v_i$ and $v_j$ in the following way:  $x_{i, j} = 0$ implies the edge points towards $v_i$, and $x_{i, j} = 1$ implies the edge points towards $v_j$.  The function $\SINK$ computes whether or not there is a sink in the graph.  In other words,
\[
\SINK(x) = 1 \iff~ \exists i \in [n]~\text{such that all edges adjacent to}~i~\text{are incoming}.
\]
\end{definition}
We now define the notion of \emph{projections} of strings to certain subsets of coordinates.
Let $X \in \bool^{\binom{n}{2}}$.  For any vertex $v_i$, let $E_{v_i}$ be the set of $n - 1$ coordinates corresponding to the $n - 1$ edges adjacent to $v_i$.  Let $X_{v_i}$ denote the $(n - 1)$-bit string obtained by projecting $X$ to the coordinates in $E_{v_i}$.

\begin{definition}[Entropy]
Let $X$ be a discrete random variable. The entropy $H(X)$ is defined as
\[
H(X) = \sum_{s \in \textnormal{supp(X)}} \Pr[X=s] \log \frac{1}{\Pr[X=s]}.
\]
\end{definition}

\begin{fact}[Folklore]
\label{fact:entropy-support}
$\textnormal{supp}(X) = k \implies H(X) \leq \log k$, with equality if and only if $X$ is uniform.
\end{fact}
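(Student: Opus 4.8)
The plan is to prove this standard maximum-entropy bound by comparing the distribution of $X$ with the uniform distribution on its support, using only the elementary scalar inequality $\ln u \leq u - 1$. Throughout, write $p_s := \Pr[X = s]$ for $s \in \textnormal{supp}(X)$; by definition of the support each $p_s > 0$, they sum to $1$, and there are exactly $k$ of them.

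First I would rewrite the gap between the claimed maximum and the actual entropy as a single sum. Since $\sum_s p_s = 1$, we have $\log k = \sum_s p_s \log k$, so
\[
\log k - H(X) = \sum_{s \in \textnormal{supp}(X)} p_s \log k + \sum_{s \in \textnormal{supp}(X)} p_s \log p_s = \sum_{s \in \textnormal{supp}(X)} p_s \log(k\, p_s).
\]
It therefore suffices to show that this quantity is nonnegative.

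Next I would invoke the inequality $\ln u \leq u - 1$ for all $u > 0$, equivalently $\ln t \geq 1 - 1/t$, which holds with equality precisely when $t = 1$. Converting to the base-$2$ logarithm used in the definition (the two differ by the positive multiplicative constant $\log e = 1/\ln 2$), this gives $\log(k p_s) \geq (1 - \tfrac{1}{k p_s}) \log e$ for each $s$. Summing against the weights $p_s$ yields
\[
\sum_{s} p_s \log(k p_s) \geq (\log e) \sum_{s} p_s\Bigl(1 - \frac{1}{k p_s}\Bigr) = (\log e)\Bigl(\sum_s p_s - \sum_s \frac{1}{k}\Bigr) = (\log e)(1 - 1) = 0,
\]
where the last step uses $\sum_s p_s = 1$ and $|\textnormal{supp}(X)| = k$. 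This establishes $H(X) \leq \log k$.

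Finally, for the equality characterization, I would note that the only inequality used is the scalar bound applied at the points $t = k p_s$, so equality holds if and only if $k p_s = 1$ for every $s \in \textnormal{supp}(X)$, i.e.\ $p_s = 1/k$ for all $s$ --- exactly the condition that $X$ is uniform on its support. There is no genuine obstacle in this argument; the only points demanding care are preserving the equality case of the scalar inequality and correctly tracking the base-change constant $\log e$, neither of which affects the sign of the bound. An alternative, essentially equivalent, route is to apply Jensen's inequality to the concave function $\log$ via $H(X) = \E[\log(1/p_X)] \leq \log \E[1/p_X] = \log k$, with the equality case following from strict concavity.
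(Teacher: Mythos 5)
Your proof is correct and complete, including the equality characterization. Note that the paper states this fact as ``Folklore'' and supplies no proof of its own, so there is no in-paper argument to compare against; your derivation --- the standard Gibbs-inequality computation via $\ln u \leq u - 1$ applied to $\log k - H(X) = \sum_s p_s \log(k p_s)$, or equivalently Jensen's inequality applied to $H(X) = \E[\log(1/p_X)]$ --- is precisely the textbook route one would use to fill this gap, and you track both the base-change constant and the equality case (each scalar inequality is tight iff $k p_s = 1$, summed with strictly positive weights $p_s$) correctly.
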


\begin{lemma}[Shearer's Lemma~\cite{CGFS86} (see also \cite{Jaikumar2001})]\label{lem:shearer}
Let $X = (X_1, \dots, X_t)$ be a random variable.  If $S$ is a set of projections such that for each $i \in [t]$,~$i$ appears in at least $k$ projections, then $\sum_{P \in S}[H_{X_P}] \geq kH(X)$.
\end{lemma}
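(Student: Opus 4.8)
The plan is to derive Shearer's Lemma from two standard facts about Shannon entropy: the chain rule, and the fact that conditioning does not increase entropy, i.e.\ $H(A \mid B, C) \le H(A \mid B)$. (Throughout, $X_P$ denotes the subtuple of $X$ indexed by the coordinates in a projection $P$, and $H(X_P)$ is its entropy; the statement's $H_{X_P}$ is read this way.) First I would fix once and for all the linear order $1 < 2 < \cdots < t$ on coordinates and write $X_{<i}$ for $(X_1, \ldots, X_{i-1})$. The chain rule then gives the global identity $H(X) = \sum_{i=1}^t H(X_i \mid X_{<i})$, and these $t$ conditional entropies will be the common ``atoms'' into which both sides of the claimed inequality are decomposed.

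Next I would analyse a single projection $P \in S$ in isolation. Listing its elements in increasing order as $i_1 < i_2 < \cdots < i_m$, the chain rule applied \emph{inside} $P$ yields $H(X_P) = \sum_{j=1}^m H(X_{i_j} \mid X_{i_1}, \ldots, X_{i_{j-1}})$. The key observation is that the conditioning set $\{i_1, \ldots, i_{j-1}\}$ is contained in the full prefix $\{1, \ldots, i_j - 1\}$, so conditioning-reduces-entropy gives $H(X_{i_j} \mid X_{i_1}, \ldots, X_{i_{j-1}}) \ge H(X_{i_j} \mid X_{<i_j})$. Summing over $j$ produces the per-projection lower bound $H(X_P) \ge \sum_{i \in P} H(X_i \mid X_{<i})$, rewriting each projection's entropy in terms of the same global atoms.

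Finally I would sum this bound over all $P \in S$ and exchange the order of summation to get $\sum_{P \in S} H(X_P) \ge \sum_{i \in [t]} \lvert\{P \in S : i \in P\}\rvert \cdot H(X_i \mid X_{<i})$. Since every coordinate $i$ lies in at least $k$ projections and each conditional entropy $H(X_i \mid X_{<i})$ is non-negative, each coefficient may be lowered to $k$, yielding $\sum_{P \in S} H(X_P) \ge k \sum_{i \in [t]} H(X_i \mid X_{<i}) = k\,H(X)$ by a second application of the chain rule. This is a classical result with a short proof, so there is no deep obstacle; the only point requiring care is the nesting step, where it is essential that the coordinates are processed in a single globally fixed order so that the conditioning set arising within each projection is genuinely a subset of the corresponding global prefix---precisely the condition that licenses the conditioning-reduces-entropy inequality. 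Using non-negativity of conditional entropy to pass from ``at least $k$'' to a clean factor of $k$ is then routine.
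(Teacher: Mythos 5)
Your proof is correct. Note, however, that the paper does not prove this statement at all: it is quoted as a known result (Shearer's Lemma), with the proof deferred to the cited references \cite{CGFS86, Jaikumar2001}. The argument you give---decomposing $H(X)$ via the chain rule into the atoms $H(X_i \mid X_{<i})$, lower-bounding each $H(X_P)$ by $\sum_{i \in P} H(X_i \mid X_{<i})$ using the fact that conditioning on a superset only decreases entropy, and then double-counting with non-negativity of conditional entropy---is precisely the classical proof found in those references, and you correctly isolate the one delicate point: the conditioning sets inside each projection must be nested within the globally ordered prefixes for the monotonicity step to apply.
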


Chattopadhyay et al.~\cite{CMS18} introduced and used the function $\SINK \circ \XOR$ to refute the long-standing Log-Approximate-Rank Conjecture, along with several other conjectures.  
They observe that $\SINK \circ \XOR \in \PL_1$~\cite[Theorem~1.10]{CMS18}.
\begin{lemma}[Part 1 of Theorem~1.10 in~\cite{CMS18}]
\[
\SINK \circ \XOR \in \PL_1.
\]
\end{lemma}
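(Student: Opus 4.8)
The plan is to reduce the statement to a bound on the spectral norm $\sum_S |\wh g(S)|$ of the base function $g=\SINK$, exploiting the fact that composing with $\XOR$ preserves this quantity. First I would record that, with $\chi_S(x)=(-1)^{\sum_{i\in S}x_i}$ as in the excerpt, $\chi_S(x\oplus y)=\chi_S(x)\chi_S(y)$ because $(-1)^{x_i\oplus y_i}=(-1)^{x_i+y_i}$. Hence, expanding $g=\sum_S\wh g(S)\chi_S$, we get $g\circ\XOR(x,y)=\sum_S\wh g(S)\chi_S(x)\chi_S(y)$, so the nonzero Fourier coefficients of $g\circ\XOR$ are exactly $\wh{g\circ\XOR}(S,S)=\wh g(S)$. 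Thus $\sum_{S,T}|\wh{g\circ\XOR}(S,T)|=\sum_S|\wh g(S)|$, and it is enough to show that $\SINK$ has spectral norm polynomial in its number of edge variables $\binom n2$.

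For that bound, the crucial structural remark is that a tournament has at most one sink: if distinct vertices $v_i,v_j$ were both sinks, the single edge joining them would have to point into both, which is impossible. Therefore the $n$ indicators $\SINK_i$, each asserting that all edges incident to $v_i$ point into $v_i$, are pairwise mutually exclusive, and as real functions $\SINK=\sum_{i=1}^n\SINK_i$. Each $\SINK_i$ is a conjunction of the $n-1$ literals fixing the orientations of the edges at $v_i$, and a single monomial (an $\AND$ of literals) has spectral norm exactly $1$: writing each literal as $(1\pm\chi_{\{e\}})/2$ and expanding, all $2^{n-1}$ coefficients have magnitude $2^{-(n-1)}$ and sum to $1$. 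By the triangle inequality for the spectral norm I then get $\sum_S|\wh{\SINK}(S)|\le\sum_{i=1}^n\sum_S|\wh{\SINK_i}(S)|=n$.

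Combining the two steps, $\SINK\circ\XOR$ has spectral norm at most $n=O\big(\sqrt{\binom n2}\big)$, which is polynomial in the number of inputs, so $\SINK\circ\XOR\in\PL_1$. One routine caveat to dispatch is that $\PL_1$ is stated for $\{0,1\}$-valued functions, but switching between the $\{0,1\}$ and $\pm1$ encodings only rescales the non-constant Fourier mass by a factor of $2$ and adjusts the constant term, leaving the polynomial bound intact. I expect the only real content to be the ``at most one sink'' observation, which collapses $\SINK$ into a sum of just $n$ monomials; once that is available, both the per-monomial spectral-norm computation and its invariance under $\XOR$ are standard and present no genuine obstacle.
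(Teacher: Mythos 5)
Your proposal is correct. Note that the paper itself gives no proof of this lemma --- it is stated purely as a citation to \cite{CMS18} --- and your argument (decomposing $\SINK$ into $n$ pairwise-disjoint sink-indicator monomials, each of spectral norm $1$, and using that composition with $\XOR$ preserves the spectral norm since $\chi_S(x\oplus y)=\chi_S(x)\chi_S(y)$) is exactly the standard argument from that cited work, so it matches the intended proof.
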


It is also implicit from their work that $\SINK \circ \XOR$ does not contain large monochromatic rectangles under the uniform distribution.  More precisely, plugging the value $\epsilon = 0$ in \cite[Claim~6.4]{CMS18} implies that any monochromatic rectangle in the communication matrix of $\SINK \circ \XOR$ on $2\binom{n}{2}$ variables must have weight at most $2^{2\binom{n}{2} - \Omega(n)}$.  However, we do not require the full power of their proof for our purpose, and therefore produce a self-contained proof below.

\begin{theorem}\label{thm:sink-xor-rectangle}
Any monochromatic rectangle $R = A \times B$ in the communication matrix of $\SINK \circ \XOR$ must satisfy $|R| \leq 2^{2\binom{n}{2} - n + \log n +1}$.
\end{theorem}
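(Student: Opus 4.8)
The plan is to treat monochromatic $1$-rectangles and $0$-rectangles separately, since the combinatorics differs sharply between the two. Throughout I write $m=\binom{n}{2}$, and for each vertex $v_i$ I let $c_i\in\bool^{n-1}$ be the unique assignment to the edges in $E_{v_i}$ that makes $v_i$ a sink, so that $v_i$ is a sink of $z$ exactly when $z_{v_i}=c_i$. The structural fact I will lean on is that a tournament has \emph{at most one} sink: hence the sets $V_i=\{z\in\bool^m : z_{v_i}=c_i\}$ are pairwise disjoint, and since each has size $2^{m-(n-1)}$ we get $|\SINK^{-1}(1)|=\sum_{i=1}^n|V_i|=n\cdot 2^{m-n+1}$.

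For a monochromatic $1$-rectangle $R=A\times B$ the bound will come essentially for free by counting. Fixing any $a_0\in A$, every $b\in B$ satisfies $\SINK(a_0\oplus b)=1$, so $B\subseteq a_0\oplus\SINK^{-1}(1)$ and thus $|B|\le n\cdot 2^{m-n+1}$; symmetrically $|A|\le n\cdot 2^{m-n+1}$. Multiplying and using $n\le 2^{n-1}$ then yields $|R|\le n^2\,2^{2m-2n+2}\le 2^{2m-n+\log n+1}$.

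The substantive case is a monochromatic $0$-rectangle $R=A\times B$, where $A$ and $B$ can each be almost all of $\bool^m$, so a naive count is useless and I would instead invoke Shearer's lemma (Lemma~\ref{lem:shearer}). First I extract a per-vertex constraint: since $\SINK(a\oplus b)=0$ for all $a\in A, b\in B$, no $v_i$ is a sink of $a\oplus b$, i.e.\ $a_{v_i}\oplus b_{v_i}\neq c_i$ for every $i$ and every $a,b$. Fixing $i$ and setting $A_{v_i}=\{a_{v_i}:a\in A\}$ and $B_{v_i}=\{b_{v_i}\oplus c_i:b\in B\}$, this says precisely that $A_{v_i}\cap B_{v_i}=\emptyset$ as subsets of $\bool^{n-1}$, so $|A_{v_i}|+|B_{v_i}|\le 2^{n-1}$, and by AM--GM $\log|A_{v_i}|+\log|B_{v_i}|\le 2(n-2)$.

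Finally I would aggregate these constraints via Shearer. Taking $a$ uniform on $A$, the $n$ vertex-projections cover each edge exactly twice (each edge is incident to exactly two vertices), so Lemma~\ref{lem:shearer} with $k=2$ gives $\sum_i H(a_{v_i})\ge 2H(a)=2\log|A|$, and likewise $\sum_i H(b_{v_i})\ge 2\log|B|$. Bounding each entropy by the logarithm of its support size (Fact~\ref{fact:entropy-support}) and summing, $2\log|R|\le\sum_i(\log|A_{v_i}|+\log|B_{v_i}|)\le 2n(n-2)$, whence $\log|R|\le n(n-2)=2m-n\le 2m-n+\log n+1$. The hard part is precisely this $0$-rectangle case: the whole argument hinges on converting the ``no sink'' condition into the clean disjointness $A_{v_i}\cap B_{v_i}=\emptyset$, and on verifying that the family of vertex-projections meets the hypothesis of Shearer's lemma with covering multiplicity exactly $2$, which is what lets the quadratic-sized ambient space collapse to the claimed bound.
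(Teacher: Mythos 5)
Your proof is correct and follows essentially the same route as the paper: a counting bound on the number of $1$-inputs of $\SINK$ handles $1$-rectangles, while $0$-rectangles are handled by Shearer's lemma over the vertex projections, the per-vertex constraint $|A_{v_i}|+|B_{v_i}|\le 2^{n-1}$, and AM--GM. The only cosmetic differences are that you apply Shearer separately to $A$ and to $B$ (equivalent to the paper's single application to the concatenated variable, since the two coordinates are independent and entropy is additive) and that you express the no-sink condition as disjointness of $A_{v_i}$ and the translate $B_{v_i}\oplus c_i$, which is the same counting the paper does implicitly.
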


\begin{proof}
  It is easy to verify that the probability of a 1-input under the uniform distribution equals $n/2^{n - 1}$.  Hence if  $R$ is a 1-monochromatic rectangle, then
$|R| \le 2^{2\binom{n}{2}} \times n / 2^{n-1}$, as claimed in the theorem.

  Let $R = A \times B$ be a 0-monochromatic rectangle. Consider the random variable $XY$ ($X$ concatenated with $Y$) over $2\binom{n}{2}$ coordinates, when $X$ and $Y$ are sampled uniformly from $A$ and $B$, respectively. From Fact~\ref{fact:entropy-support} we have $H(XY) = \log |R|$.
  
  Let $S$ be the set of projections $S := \{E_{v_i} \mid 1 \leq i \leq n\}$. Then each coordinate appears in exactly two projections. Hence by Lemma~\ref{lem:shearer},
  \[2H(XY) \le \sum_{P\in S} H((XY)_P) = \sum_{i\in [n]} H((XY)_{v_i}). \]
  We  now bound the entropy in $XY$ restricted to each of the projections. Let $A_{v_i}$ and $B_{v_i}$ be the projections of $A$ and $B$ on $E_{v_i}$, respectively. Since there is no input in $R$ which is a sink, we have $|\textnormal{supp}(A_{v_i})| + |\textnormal{supp}(B_{v_i})| \leq 2^{n - 1}$. (Each string in $A_{v_i}$ rules out one string from $B_{v_i}$ and vice versa.) By the AM-GM inequality,  $|\textnormal{supp}(A_{v_i})| \cdot |\textnormal{supp}(B_{v_i})| \leq 2^{2n - 4}$.  Hence Fact~\ref{fact:entropy-support} implies that  $H((XY)_{v_i}) \leq 2n - 4$.

  Returning to our use of Lemma~\ref{lem:shearer}, we obtain 
\begin{align*}
& 2H(XY) \leq \sum_{P \in S}H((XY)_P) \leq n(2n - 4)\\
\implies & H(XY) \leq 2\binom{n}{2} - n\\
\implies & |R| \leq 2^{2\binom{n}{2} - n}. 
\end{align*}
\end{proof}

Along with Lemma~\ref{lem:largerects}, Theorem~\ref{thm:sink-xor-rectangle} shows that any linear decision list computing the function $\SINK \circ \XOR$ on $2\binom{n}{2}$ variables (which is in $\PL_1$) must have size at least $2^{n/2}$.
This completes the proof of Theorem~\ref{thm:pl1-not-in-ldl}. 

  Clearly, $\SINK \circ \XOR$ also witnesses $\MAJ \circ \MAJ \not\subseteq \LDL$. However, the lower bound against $\LDL$ is again only subexponential.  

\section{Conclusions}

We show that $\MAJ \circ \XOR$ cannot be computed by polynomial sized linear decision lists, resolving an open question of Tur{\'a}n and Vatan \cite{TV97}. We also show that several spectral classes and polynomial threshold function classes are incomparable to linear decision lists. 
Figure~\ref{fig: fig} depicts where the class $\LDL$, and its small-weight version $\wh\LDL$, fit in the low depth threshold circuit hierarchy.

A subset of the authors~\cite{CM18} showed that a decision list of \emph{exact threshold functions} cannot be computed by $\LTF \circ \MAJ$. 
A natural question that arises is whether $\LDL$ is incomparable with $\LTF \circ \MAJ$. 
(Note that the function from \cite{CM18} separating $\LTF\circ \LTF$ from $\LTF \circ \MAJ$ does not settle this question as it is also not in $\LDL$ --  it contains the function $\OR\circ \EQ$ as a subfunction.)

Another natural question is whether
$\wh\LDL$ is strictly contained in $\LDL$; that is, whether weights matter in linear decision lists. 

\section{Acknowledgments}
We thank Rahul Santhanam for discussions concerning decision lists. 
We thank Jaikumar Radhakrishnan for referring us to Harper's theorem.

\bibliography{bibo}

\end{document}